\documentclass[runningheads]{llncs}
\usepackage{algorithm,algpseudocode}
\usepackage{mathtools}

\usepackage{graphicx}
\usepackage{tabularx}
\usepackage{enumitem}
\usepackage{textcomp}
\usepackage{xcolor}
\usepackage{enumitem}
\usepackage{float}
\usepackage[font=scriptsize]{caption}
\usepackage[labelformat=simple]{subcaption}

\usepackage{booktabs}

\usepackage{pgf,tikz,pgfplots}
\pgfplotsset{compat=1.15}
\usepackage{mathrsfs}
\usepackage{hyperref}
\usepackage{multicol}
\usepackage{multirow}
\usepackage{amssymb}
\usepackage{xcolor}
\usepackage{algorithm}
\usepackage{algpseudocode}
\usepackage[export]{adjustbox}
\newcommand{\cC}{{\mathcal C}}
\newcommand{\cW}{\mathcal W}
\newcommand{\tL}{Tr(G)}

\newcommand{\maxG}{max_{6}(\mathcal{I}_{SP}(G))}
\usepackage{orcidlink}

\begin{document}
\title{Graph Traversal via Connected Mobile Agents}

\author{Saswata Jana\inst{1}\orcidlink{0000-0003-3238-8233} \thanks{Supported by  Prime Minister's Research Fellowship (PMRF) scheme of the Govt. of India (PMRF-ID: 1902165)} \and Giuseppe F. Italiano\inst{2}\orcidlink{0000-0002-9492-9894} \and Partha Sarathi Mandal\inst{1}\orcidlink{0000-0002-8632-5767}}
\authorrunning{Jana et al.}
\institute{Indian Institute of Technology Guwahati, Guwahati, India \and Luiss University, Rome, Italy}
\email{}

\maketitle
\begin{abstract}
This paper considers the Hamiltonian walk problem in the multi-agent coordination framework, referred to as $k$-agents Hamiltonian walk problem ($k$-HWP).
In this problem, a set of $k$ connected agents collectively compute a spanning walk of a given undirected graph in the minimum steps.
At each step, the agents are at $k$ distinct vertices and the induced subgraph made by the occupied vertices remains connected. 
In the next consecutive steps, each agent may remain stationary or move to one of its neighbours.
To the best of our knowledge, this problem has not been previously explored in the context of multi-agent systems with connectivity.
As a generalization of the well-known Hamiltonian walk problem (when $k=1$), $k$-HWP is NP-hard. 
We propose a $(3-\frac{1}{21})$-approximation algorithm for 2-HWP on arbitrary graphs.
For the tree, we define a restricted version of the problem and present an optimal algorithm for arbitrary values of $k$.
Finally, we formalize the problem for $k$-uniform hypergraphs and present a $2(1+\ln k)$-approximation algorithm. 
This result is also adapted to design an approximation algorithm for $k$-HWP on general graphs when $k = O(1)$.
\end{abstract}

\keywords{Approximation Algorithms; Graph Traversal; Mobile Agent;\\ Hamiltonian Walk; Hypergraphs.} 

\section{Introduction}
\label{sec:Introduction}

Graph traversal is a fundamental problem in computer science and robotics, where the goal is to visit every node of a graph at least once. 
A key objective of graph traversal is to minimize the number of steps needed to traverse the entire graph.
Typically, such tasks can be accomplished by a single agent.
However, relying on a single agent can be inefficient, prone to failure, and unsuitable for large or complex environments.
These limitations motivate the need for multi-agent systems, where the agents coordinate among themselves to traverse the whole graph while reducing the traversal time and handling failures effectively.
One major challenge in coordinating multiple agents is ensuring safe and reliable communication during the traversal.
In many practical systems, such as swarm robotics or drone networks, connectivity among agents is essential to maintain real-time coordination, prevent collisions, and quickly react to unexpected failures.
In addition, the connectivity constraint also plays a key role in fault-tolerant settings. For instance, if one agent fails, the connected agents can detect the failure and either compensate for the loss or design a strategy to recover the lost functionality.

\noindent \textit{Preliminaries, Related works and Problem definition:} Let $G=(V, E)$ be a simple undirected connected graph (without self-loops and multiple edges), where $V$ is the set of vertices and $E$ is the set of edges. An edge between two vertices $v_i$ and $v_j$ is denoted by $(v_i, v_j)$. Let $n=|V|$ be the number of vertices and $m=|E|$ be the number of edges of $G$. $N(v)$ is the open neighbourhood of the vertex $v \in V$, consisting of all its neighbours, while $N[v]$ is the closed neighbourhood of $v$, defined as $N[v] = N(v) \cup \{v\}$.
A sequence of distinct vertices $P=\{v_1, v_2, \cdots, v_{p}\}$ is called a \textit{path} of the graph $G$ if $(v_i, v_{i+1}) \in E ~\forall i \in \{1, 2, \cdots, p-1\}$.
Whenever both endpoints of a path $v_1$ and $v_p$ are explicitly mentioned, we denote the path as $P_{v_1v_p}$. Additionally, if $v_1 = v_p$, the sequence is called a \textit{cycle} of the graph $G$.
If $G$ admits a path of length $(n-1)$, then the path is called a \textit{Hamiltonian path}. Similarly, if a graph has a cycle of length $n$, then the cycle is called a \textit{Hamiltonian cycle}, and the graph is called \textit{Hamiltonian graph}. In other words, a graph is called $\textit{Hamiltonian}$ if there exists a cycle that passes through each vertex of $G$ exactly once. The problem of finding a Hamiltonian cycle in a graph, or checking whether a given graph is Hamiltonian, is known as the \textit{Hamiltonian cycle problem}, which is one of Karp's 21 NP-complete problems \cite{Karp1972}. Given that a class of graphs lacks a Hamiltonian path or cycle, there is growing interest in determining a minimum-length spanning walk of those graphs. A \textit{walk} is a path with some possible repeated vertices. If both endpoints of the walk are the same (or different), the walk is termed a closed (or open) walk. A walk is a \textit{spanning walk} of $G$ if it passes through each vertex of $G$ at least once. A closed spanning walk of minimum length is termed as \textit{Hamiltonian walk}. The length of a Hamiltonian walk of the graph $G$ is known as \textit{Hamiltonian number} and denoted as $h^{+}(G)$. If $h^+(G) = n$, the graph $G$ is Hamiltonian and vice versa. $n \leq h^+(G) \leq 2(n-1)$. It is well known that $h^+(G) = 2n-2$ iff $G$ is a tree \cite{goodman1974hamiltonian}.
An open spanning walk of $G$ is a spanning walk with different endpoints, and we use the notation $h^-(G)$ to denote the length of the minimum open spanning walk of $G$. 
Finding the minimum-length open spanning walk of a graph is known as the \textit{open Hamiltonian walk problem}.
If the graph $G$ is Hamiltonian, then $h^-(G) = (n-1)$. However, the converse is not true.

The challenge of finding an open spanning walk is closely related to the classic \textit{traveling salesman problem }(TSP), in which the objective is to locate a minimum weight Hamiltonian cycle in a given complete weighted graph.
Christofides-Serdyukov \cite{Christofides} \cite{serdyukov1978nekotorykh} proposed a $\frac{3}{2}$-approximation algorithm for TSP if the weight function satisfies the triangular inequality. After decades, Karlin et al. \cite{karlin2021slightly} give a randomized $\frac{3}{2} - \epsilon$ approximation algorithm for some $\epsilon > 10^{-36}$. A polynomial time approximation scheme (PTAS) exists for some special instances of the TSP, such as for Euclidean TSP \cite{arora1996polynomial}, \cite{mitchell1999guillotine}, planar TSP \cite{arkin1998localPacking}, \cite{grigni1995approximation}, \cite{klein2005linear}, and bounded genus TSP \cite{demaine2010approximation}. The \textit{multiple traveling salesman problem} \cite{cheikhrouhou2021comprehensive} is a generalization of the TSP, where multiple salesmen independently (no restriction to the connectivity) visit the vertices of the graph. Hoogeveen \cite{HOOGEVEEN1991291} modified Christofide's heuristic to find a minimum weight Hamiltonian path of a given weighted complete graph when the weights satisfy the triangular inequality. The modified heuristic also gives a $\frac{3}{2}$ approximation factor, which is still the best-known factor for the problem. We can apply the modified Christofides heuristic to find an open spanning walk for an undirected, unweighted, and connected graph $G$ by converting it into a complete weighted graph. To do this, we assign a weight equal to one to each existing edge and add all missing edges with a weight equal to the length of the shortest paths between corresponding vertices.

We address the open Hamiltonian walk problem from the perspective of a mobile agent that explores the entire graph $G$ in a minimum number of steps, starting from an arbitrary vertex. At each step, the agent can move to one of its neighbouring vertices. Therefore, any minimum open-spanning walk constitutes a feasible solution. Additionally, we can apply the modified Christofide's heuristic \cite{HOOGEVEEN1991291} to obtain a $\frac{3}{2}$-approximation algorithm for the problem. In this paper, we formulate the generalized version of the Hamiltonian walk problem, which we named as $k$-\textit{agent Hamiltonian Walk problem} (k-HWP), where a set of $k~(\geq 1)$ agents collectively explore the graph. The constraint is that at any instant of time $t(\geq0)$, $k$ agents must lie on $k$ distinct vertices of $G$, so that the subgraph induced by the vertices occupied by the $k$ agents must be connected. 
Initially, the agents can start from any arbitrary vertices that satisfy the above constraints.
In the next step, all agents move synchronously to one of their neighbouring vertices or choose to remain stationary. Let $A = \{a_1, a_2, \cdots, a_k\}$ be the set of $k$ agents. 
We use $v_i^t$ to denote the vertex $v_i$ occupied by a agent $a_i$ at the $t$-th step.
We denote the \textit{configuration} at the $t$-th step by $\mathcal{C}_t$, and defined by the $k$-tuple $(v_1^t, v_2^t, \cdots, v_k^t)$, where $v_i^t \neq v_j^t, \forall i \neq j$ and the subgraph induced by the set of vertices $\{v_1^t, v_2^t, \cdots, v_k^t\} \subseteq V$ is connected.
For convenience, we slightly abuse the notation $\mathcal{C}_t$ to denote the set $\{v_1^t, v_2^t, \cdots, v_k^t\}$ and, more generally, any arbitrary configuration. Two configurations, $\mathcal{C}_t$ and $\mathcal{C}_{t'}$, are said to be \textit{adjacent} if $\forall i \in \{1, 2, \cdots, k\}$ either $v_i^t = v_i^{t'}$ or $(v_i^t, v_i^{t'}) \in E$.
We imagine a graph whose vertices correspond to configurations, where two vertices are adjacent if the corresponding configurations are adjacent. We refer to the edge between two adjacent vertices, i.e., configurations, as a \textit{transition edge}.
The objective of $k$-HWP is to find a sequence of configurations $\{\mathcal{C}_0, \mathcal{C}_1, \cdots, \mathcal{C}_l\}$ with a \textit{minimum length} (i.e., $l$ is minimum) such that each vertex of $G$ appears in at least one configuration in the sequence, and there is a transition edge between every two consecutive configurations. In other words, the objective is to find an open spanning walk of configurations $\{\mathcal{C}_t\}^l_{t=0}$ of minimum length, where each vertex in this walk is a $k$-tuple configuration and $\cup_{t=0}^l \cup_{i=1}^k v_i^t = V$. We denote the length of this open walk by $h^-_k(G)$. If there are $(l+1)$ many configurations in the walk, then $h^-_k(G) = l$. When $k=1$, $h_1^-(G)$ coincides with $h^-(G)$.
An open walk of configurations is called a \textit{ transition walk}.
Since we focus solely on the open walk in this paper, we use $h_k(G)$, instead of $h^-_k(G)$, to denote the length of the minimum open spanning walk of configurations for the graph $G$ with $k$ agents. The NP-completeness of $k$-HWP follows from the fact that when $k=1$, it is equivalent to the Hamiltonian walk problem, which is known to be NP-complete.
We present the formal definition of $k$-HWP below.
\begin{definition}[Problem: $k$-HWP]
 Consider an undirected connected graph $G$ and a collection of $k$ agents. The goal is to find a sequence of configurations of minimum length such that each successive pair in the sequence is adjacent, and every vertex in $G$ is included at least once in one of these configurations.
\end{definition}
When $k=2$, any configuration represents an edge of $G$. Therefore, we can interpret the problem as traversing the graph through its edges.
\begin{definition}[$r$-transition edge]
  The transition edge between two configurations $\mathcal{C}_t$ and $\mathcal{C}_{t'}$ is said to be a $r$-transition edge if $\mathcal{C}_{t'}$ has exactly $r$ vertices that are not in $\mathcal{C}_t$. 
\end{definition}
For example of a $r$-transition edge, let $k=4$, $\mathcal{C}_t= (v_1, v_2, v_3, v_4)$ and $\mathcal{C}_{t'} = (v_5, v_2, v_4, v_6)$. If $\mathcal{C}_t$ and $\mathcal{C}_{t'}$ are adjacent, then the edge between them is called a $2$-transition edge.
The transition edges of $k$-HWP can be categorized into $k$ distinct classes based on the value of $r \in \{1,2, \cdots, k\}$. 
We disregard $0$-transition edges, which indicate that agents either remain stationary or rearrange their positions without visiting new vertices. Since our objective is to identify a transition walk of minimum length, we consistently strive to minimize the number of configurations encountered during any transition walk.
For example, let $\mathcal{C}_t= (v_1, v_2, v_3, v_4)$, $\mathcal{C}_{t'} = (v_2, v_4, v_3, v_1)$ and $\mathcal{C}_{t''} = (v_5, v_6, v_7, v_8)$. Suppose that a transition walk includes the configuration $\mathcal{C}_t$, followed by $\mathcal{C}_{t'}$, and then $\mathcal{C}_{t''}$. The transition edge between $\mathcal{C}_t$ and $\mathcal{C}_{t'}$ is $0$-transition edge. We can modify the walk by removing $\mathcal{C}_{t'}$ and replacing $\mathcal{C}_{t''}$ with $\mathcal{C}_{t'''}=(v_8, v_5, v_7, v_6)$. The modification is valid because the configurations $\mathcal{C}_{t'}$, $\mathcal{C}_{t''}$ are adjacent, implies $(v_1, v_8), (v_2, v_5), (v_3, v_7), (v_4, v_6)$ all belonging to $E$, implies $\mathcal{C}_t$, $\mathcal{C}_{t'''}$ are adjacent. By making this change, we eliminate the $0$-transition edge.

\subsection{Our Contribution}
In this paper, our contributions are the following:
\begin{itemize}
    \item We define $k$-agents Hamiltonian walk problem ($k$-HWP).
    \item For the tree, we propose an optimal algorithm when $k \leq 3$. For any value of $k$, we present an optimal algorithm for the restricted version of the problem on tree, where we only allow 1-transition edges (Theorem \ref{thm:OptimalBound_k-RHWP_Tree}).
    \item We give a $(3-\frac{1}{21})$-approximation algorithm on arbitrary graphs for $k=2$ (Theorem \ref{thm:<3-2-HWP}).
    \item We extend the problem definition to hypergraphs and propose a $2(1+\ln k)$-approximation algorithm for $k$-uniform hypergraphs (Theorem \ref{thm:approx-hypergraph}). 
    This algorithm is further developed into an approximation algorithm for $k$-HWP, when $k=O(1)$ (Theorem \ref{thm:k-HWP-graph}).
\end{itemize}

\section{Algorithms for Tree}
\label{sec:tree}
Since the $k$-HWP problem is NP-hard for general graphs, our goal is to explore specific graph classes where optimal solutions can be achieved. 
This section examines $k$-HWP problem on acyclic graphs and presents results for a constrained version of the problem.
First, we describe an optimal strategy (\textsc{1-HWP-For-Tree}) to solve 1-HWP for tree. This serves as a foundation for understanding the generalized strategy (for arbitrary values of $k$), which we discuss later in the section. A vertex is called \textit{leaf} (or \textit{pendant}) if it has degree one. The diameter of the graph $G$, denoted $diam(G)$, is the length of the longest path.

\noindent \textbf{Description of the Algorithm} (\textsc{1-HWP-For-Tree}): Let $P^* = P_{v_1^* v_2^*}$ be a longest path of the tree $G$. Then, both $v_1^*$ and $v_2^*$ must be leaves of $G$. The agent starts the traversal from $v_1^*$ and aims to eventually reach $v_2^*$ while traversing all the vertices. If multiple unexplored vertices are available, it prioritizes those not located on $P^*$. Whenever all adjacent vertices are explored, it backtracks to its parent vertex. A parent vertex of $v$ is the vertex from which the vertex $v$ is explored for the first time. The pseudocode of the Algorithm is provided in Algorithm \ref{alg:1-HWP-Tree}. We analyze the algorithm below.

\begin{algorithm}
	\caption{\textsc{1-HWP-For-Tree}}	
	\label{alg:1-HWP-Tree}
	\begin{algorithmic}[1]
        \State {Find the longest path of the graph $G$, say $P^* = P_{v_1^* v_2^*}$.}
        \For{every $v$ in $V$}
        \State {$explored(v) = FALSE$ and $parent(v) = NULL$}
		\EndFor
        \State {Start from the vertex $v_1^*$ and set $v = v_1^*$; $explored(v) = TRUE$}
        \While{$v \neq v_2^*$}
        \If{$\exists$ a $v' \in N(v)$ with $explored(v') = FALSE$ and $v' \notin P^*$}
        \State{Move to the vertex $v'$}
        \State{$explored(v') = TRUE;~ parent(v') = v;~ v = v'$}
        \ElsIf{$\exists$ a $v' \in N(v)$ with $explored(v') = FALSE$ and $v' \in P^*$}
        \State{Move to the vertex $v'$}
         \State{$explored(v') = TRUE;~ parent(v') = v;~ v = v'$}
        \Else
        \State{Move to $parent(v)$ and set $v = parent(v)$}
        \EndIf
        \EndWhile
	\end{algorithmic}
\end{algorithm}

\begin{lemma}
    \label{lem:upperBound_1HWP_Tree}
    The walk returned by \textsc{1-HWP-For-Tree} is a spanning walk of $G$ with length $2(n-1) - diam(G)$, where $diam(G)$ is the diameter and $n$ is the number of vertices of $G$.
\end{lemma}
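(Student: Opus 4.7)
The plan is to decompose the walk along the longest path $P^*$. Write $P^* = (u_0, u_1, \ldots, u_d)$ with $u_0 = v_1^*$, $u_d = v_2^*$, and $d = diam(G)$. Deleting the $d$ edges of $P^*$ from $G$ leaves a forest of $d+1$ subtrees $T_0, \ldots, T_d$, where $T_i$ is the component containing $u_i$. Since the endpoints of a longest path in a tree must be leaves, $T_0 = \{u_0\}$ and $T_d = \{u_d\}$. A useful structural observation is that every vertex $v \in T_i \setminus \{u_i\}$ has all of its $G$-neighbors inside $T_i$: otherwise some $u_j$ with $j \neq i$ would be a neighbor of $v$, and combining the edge $(v,u_j)$ with the unique tree-path from $v$ to $u_i$ inside $T_i$ and the subpath of $P^*$ from $u_i$ to $u_j$ would yield a cycle, contradicting acyclicity.

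Next, I would prove by induction on $i$ that the agent, upon arriving at $u_i$ (or starting there when $i=0$), performs a full depth-first exploration of $T_i$, traversing each of its $|T_i|-1$ edges exactly twice and returning to $u_i$, after which it moves along $P^*$ to $u_{i+1}$ when $i<d$. The priority rule in line 7 ensures that at every vertex of $T_i \setminus \{u_i\}$ only a DFS move inside $T_i$ is available, since by the structural observation every neighbour lies in $T_i$ and hence off $P^*$. The traversal inside $T_i$ is therefore precisely DFS from $u_i$, which visits every vertex of $T_i$, uses each edge twice, and returns to the root. When the agent is back at $u_i$ with $T_i$ fully explored, line 7 is inapplicable and line 10 matches only $u_{i+1}$ (because $u_{i-1}$, if it exists, is already explored), so the agent advances along $P^*$ without a spurious backtrack through $parent(u_i)=u_{i-1}$.

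Finally, I would total the length. The loop terminates exactly when the agent reaches $u_d$, and since $T_d = \{u_d\}$, the agent has by then visited every vertex of every $T_i$, which is all of $V$; so the walk is spanning. Its length is
\[
\sum_{i=0}^{d} 2(|T_i|-1) \;+\; d \;=\; 2\bigl(n-(d+1)\bigr) + d \;=\; 2(n-1)-diam(G),
\]
where the first sum counts DFS edges (each of the $n-1-d$ non-$P^*$ edges traversed twice) and the $+d$ accounts for the one-way crossings of $P^*$. The main obstacle is the inductive step: confirming that the two-tier priority (non-$P^*$ before $P^*$) together with the parent-based backtrack keeps the agent inside $T_i$ until $T_i$ is exhausted, and that the return to $u_i$ always precedes the $P^*$-step. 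Both points rely on the structural observation that non-$P^*$ vertices have no $P^*$ neighbours other than their anchor $u_i$, which prevents the DFS from leaking onto $P^*$ mid-exploration.
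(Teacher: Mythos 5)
Your proof is correct and rests on the same key fact as the paper's: the priority rule forces every edge off $P^*$ to be traversed exactly twice (down and back up in a DFS of the hanging subtree) and every edge of $P^*$ exactly once, giving $2(n-1)-diam(G)$. Your explicit decomposition into the components $T_0,\ldots,T_d$ obtained by deleting the edges of $P^*$, with induction along the path, is just a more structured presentation of the paper's edge-by-edge counting argument.
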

    
\begin{proof}
    Steps (6 - 14) from the pseudocode of Algorithm \ref{alg:1-HWP-Tree} justify that the agent lying on an explored vertex $v$ backtracks (move to the parent) only when it finds all its neighbouring vertices explored. However, an agent lying on $P^*$ never backtracks, as it always finds an unexlored neighbour lying on $P^*$. So, the existence of unexplored vertex $(\notin P^*)$ means that the agent backtracks despite having an unexplored neighbour, which is a contradiction. Additionally, all the vertices on $P^*$ must be explored as there is a unique way to reach $v_2^*$ from $v_1^*$, which is through $P^*$.
    
    Let $(x, y)$ be an arbitrary edge in $E$. Without loss of generality, we assume that $x$ is the parent of $y$. If $(x, y) \in P^*$, the agent never moves back to $x$ from $y$, as it must encounter an unexplored neighbour vertex $z \in P^* \cap N(y)$ and moves to it (steps 10-11). If $(x, y) \notin P^*$, then $y$ also does not lie in $P^*$. Here, when the agent reaches $y$, it returns to $x (parent(y))$ after confirming that every vertex in $N(y)$ has been explored (steps 13-14). Consequently, all the edges that do not belong to $P^*$ are traversed twice in the walk produced by the algorithm. Therefore, the length of the walk is $2(n-1) - diam(G)$, as the tree has $(n-1)$ edges and the number of edges in the longest path equals the diameter of the tree.
\qed \end{proof}
\begin{remark}
\label{rem:time-1-HWP-tree}
    Algorithm \ref{alg:1-HWP-Tree} runs polynomially. This is because we can find the diameter of the graph in linear time ( in terms of $n$). All other operations per transition also take polynomial time, since all checks and assign operations for a vertex are done in at most $|N(v)| \leq \Delta$ time, where $\Delta$ is the maximum degree of the tree $G$. So the algorithm needs in total $O(n) + \sum O(\Delta)$ time, where the summation is over number of steps required by the algorithm, which is at most $2n$. Hence the overall time complexity of the algorithm is $O(n \cdot \Delta)$ time.
\end{remark}

\begin{lemma}
    \label{lem:OptimalBound_1HWP_Tree}
    For a tree $G$ with $n$ vertices, $h_1(G) \geq 2(n-1) - diam(G)$.
\end{lemma}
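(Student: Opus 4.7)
The plan is to give a parity/cut argument on the edges of the tree. Consider any open spanning walk $W$ of $G$ with endpoints $u$ and $v$, of length $\ell = h_1(G)$. For each edge $e \in E(G)$, removing $e$ disconnects the tree into exactly two components $T_1^e$ and $T_2^e$; let $\mu(e)$ denote the number of times $W$ traverses $e$. The key observation is that every time $W$ crosses $e$ it changes components, so the parity of $\mu(e)$ is determined by whether $u$ and $v$ lie in the same component or in different components of $T_1^e \cup T_2^e$.

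First I would show that $\mu(e) \geq 1$ for every $e$, since $W$ is a spanning walk and each component on either side of $e$ must be entered at least once. Next, I would split into cases based on the unique $u$--$v$ path $P_{uv}$ in $G$: if $e \in P_{uv}$, then $u$ and $v$ lie in different components of $G - e$, so $\mu(e)$ must be odd, giving $\mu(e) \geq 1$; if $e \notin P_{uv}$, then $u$ and $v$ lie in the same component of $G - e$, so $\mu(e)$ must be even, and combined with $\mu(e) \geq 1$ this forces $\mu(e) \geq 2$.

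Summing over all edges, since a tree on $n$ vertices has exactly $n-1$ edges and $P_{uv}$ contributes $d(u,v)$ of them,
\begin{equation*}
\ell \;=\; \sum_{e \in E(G)} \mu(e) \;\geq\; d(u,v) \cdot 1 \;+\; \bigl((n-1) - d(u,v)\bigr)\cdot 2 \;=\; 2(n-1) - d(u,v).
\end{equation*}
Finally, using $d(u,v) \leq diam(G)$, we get $h_1(G) = \ell \geq 2(n-1) - diam(G)$, as required.

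The only delicate point is the parity claim: one must be careful that ``number of crossings'' of $e$ in a walk is indeed congruent, modulo $2$, to the indicator of $u$ and $v$ being separated by $e$. This is straightforward by viewing $W$ as a sequence of vertices and noting that each traversal of $e$ toggles the component containing the current vertex, so the final and initial components agree iff $\mu(e)$ is even. I expect no other obstacle.
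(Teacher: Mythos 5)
Your proof is correct and reaches the bound by the same overall strategy as the paper: count the multiplicity of each edge in an optimal walk $W$, show every edge is traversed at least once and every edge off the unique endpoint-to-endpoint path $P_{uv}$ at least twice, and sum to get $\ell \geq 2(n-1) - d(u,v) \geq 2(n-1) - diam(G)$. The one place you diverge is the justification of the doubling claim for off-path edges: the paper decomposes $W$ into sub-walks around the first traversal of $(x,y)$ and appeals to uniqueness of tree paths to force a return crossing, while you use a parity argument on the cut induced by $e$ (each traversal toggles the component of the current vertex, so $\mu(e)$ is even iff $u$ and $v$ lie on the same side, hence $\mu(e)\geq 2$ when $e\notin P_{uv}$). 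Your parity version is the cleaner of the two — it handles all orderings of visits to $x$ and $y$ uniformly, whereas the paper's two-case sub-walk split leaves the orientation of the first crossing somewhat implicit — and it generalizes immediately to closed walks and to weighted variants; the paper's version has the minor advantage of being phrased directly in terms of the walk's structure, which it reuses in the later tree lemmas.
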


\begin{proof}
    Any open spanning walk must contain each edge of the tree at least once.
    Consider an optimal spanning walk $W$, which starts from the vertex $v_i$ and ends at $v_j$. Since the graph is a tree, there is always a unique path $P$ between $v_i$ and $v_j$. Let $(x, y) \in E$ be an edge that does not belong to $P$. We will prove that $(x, y)$ must appear twice in this walk. Then $W$ is the union of either (i) the sub-walk $v_i$ to $x$ with no internal $y$, then $x$ to $y$, and finally the sub-walk $y$ to $v_j$, or (ii) the sub-walk $v_i$ to $y$ with no internal $x$, then $y$ to $x$, and then the sub-walk $x$ to $v_j$. For the first case, the edge $(x, y)$ is encountered for the first time when it traverses from $x$ to $y$. Then again when traverses from $y$ to $v_j$, as there is an unique path between $y$ and $v_j$ which contains $x$. Similarly, we can prove for the second case. However, any edge lying on $P$ may not meet either of the above two conditions. So, the length of the walk $W$ is $len(W) = h_1(G) \geq $ (total number of edges) + (number of edges not in $P$) $\geq (n-1)+(n-1-diam(G))$, since the length of $P$ is at most $diam(G)$.
\qed \end{proof}

Hence we have an algorithm (\textsc{1-HWP-For-Tree}) for 1-HWP on tree, which returned a spanning walk that matches with the lower bound of $h_1(G)$. Therefore, combining Lemma \ref{lem:upperBound_1HWP_Tree},  \ref{lem:OptimalBound_1HWP_Tree} and Remark \ref{rem:time-1-HWP-tree}, we can state the following theorem.

\begin{theorem}
\label{thm:OptimalBound_1HWP_Tree}
    Algorithm \textsc{1-HWP-For-Tree} provides an optimal solution for 1-HWP in polynomial time, when the input graph $G$ is a tree.
\end{theorem}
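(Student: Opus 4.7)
The plan is to obtain the theorem as an immediate consequence of the three already-established facts: the upper bound on the length of the walk produced by \textsc{1-HWP-For-Tree}, the matching lower bound on $h_1(G)$, and the polynomial running-time analysis. Since all three are already in hand, no new combinatorial argument is required; the proof is essentially a two-line synthesis, and the only thing to be careful about is making explicit that ``optimal'' means attaining the value $h_1(G)$ as defined for the single-agent case (which coincides with $h^-(G)$).

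First, I would invoke Lemma~\ref{lem:upperBound_1HWP_Tree} to conclude that the output of \textsc{1-HWP-For-Tree} is a spanning walk of $G$ whose length equals $2(n-1) - diam(G)$. This gives the upper bound $h_1(G) \le 2(n-1) - diam(G)$, since the algorithm exhibits a feasible walk of this length, starting at one leaf $v_1^*$ of a longest path and ending at the other leaf $v_2^*$ (hence an open spanning walk, matching the definition of $h_1(G) = h^-(G)$).

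Next, I would pair this with Lemma~\ref{lem:OptimalBound_1HWP_Tree}, which supplies the matching lower bound $h_1(G) \ge 2(n-1) - diam(G)$ for any tree. Together these yield $h_1(G) = 2(n-1) - diam(G)$ and show that the walk produced by \textsc{1-HWP-For-Tree} attains this optimum exactly, so the algorithm is optimal. Finally, I would cite Remark~\ref{rem:time-1-HWP-tree}, which records that the algorithm runs in $O(n \cdot \Delta)$ time, to conclude that the optimal solution is produced in polynomial time.

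The main (and only mild) obstacle is bookkeeping: one must ensure that the walk computed by the algorithm is an open spanning walk in the sense of the definition (starts and ends at different vertices, covers every vertex at least once) so that its length is indeed comparable to $h_1(G)$ rather than to $h_1^+(G)$. This is immediate because $v_1^* \ne v_2^*$ (both are leaves of the diameter path, and a tree with $n \ge 2$ has $diam(G) \ge 1$), and Lemma~\ref{lem:upperBound_1HWP_Tree} already certifies that every vertex of $G$ is explored. Aside from this sanity check, the theorem follows by direct combination of the cited results.
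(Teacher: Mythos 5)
Your proposal is correct and matches the paper's own argument exactly: the paper likewise obtains the theorem by combining Lemma~\ref{lem:upperBound_1HWP_Tree} (upper bound), Lemma~\ref{lem:OptimalBound_1HWP_Tree} (matching lower bound), and Remark~\ref{rem:time-1-HWP-tree} (polynomial running time). Your extra sanity check that the walk is open (since $v_1^* \neq v_2^*$) is a harmless and reasonable addition.
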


\noindent Using the above strategy, we now design an algorithm for a restricted version of $k$-HWP on trees, where the restriction is on the type of transition edges allowed between two configurations. We begin with the following structural property.

\begin{theorem}
    \label{thm:1-transit-edge_tree}
    If $G$ is a tree and the edge between two configurations $\mathcal{C}_t$ and $\mathcal{C}_{t'}$ in the $k$-HWP is an $r$-transition edge, then $r \leq \lfloor \frac{k}{2} \rfloor$, where $k \geq 2$.
\end{theorem}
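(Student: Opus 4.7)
I would partition the union of the two configurations into three sets: $A = \mathcal{C}_t \setminus \mathcal{C}_{t'}$, $B = \mathcal{C}_{t'} \setminus \mathcal{C}_t$, and $S = \mathcal{C}_t \cap \mathcal{C}_{t'}$, so that $|A| = |B| = r$ and $|S| = k - r$. The plan is to exploit the fact that any subset of the tree $G$ induces a forest, which sharply constrains the number of edges in the relevant subgraphs, while the induced subtrees $T_t$ on $\mathcal{C}_t$ and $T_{t'}$ on $\mathcal{C}_{t'}$ each contain exactly $k - 1$ edges.

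First, assuming $|S| \geq 1$, I would show that no edge of $G$ joins $A$ to $B$. Consider the forest $F$ induced by $G$ on $\mathcal{C}_t \cup \mathcal{C}_{t'}$, which has $k + r$ vertices and therefore at most $k + r - 1$ edges. Its edges split into edges of $T_t$ (within $S \cup A$), edges of $T_{t'}$ (within $S \cup B$), and edges between $A$ and $B$; the overlap $T_t \cap T_{t'}$ is precisely the set of tree edges inside $S$, of cardinality $e_{SS}$. Inclusion--exclusion then gives $|F| = 2(k-1) - e_{SS} + e_{AB}$, where $e_{AB}$ counts tree edges between $A$ and $B$. Combining $|F| \leq k + r - 1$ with the forest bound $e_{SS} \leq |S| - 1 = k - r - 1$ yields $e_{AB} \leq 0$, hence $e_{AB} = 0$.

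Next I would convert $e_{AB} = 0$ into an agent-counting statement. By the adjacency condition, every agent located at a vertex in $A$ in $\mathcal{C}_t$ must move along a tree edge to some vertex of $\mathcal{C}_{t'}$; since no $A$-$B$ tree edges exist, each such agent ends up in $S$. Symmetrically, every agent arriving at a vertex of $B$ must originate in $S$. Counting the $|S| = k - r$ agents initially placed in $S$: these agents can only move to $S$ or to $B$ (as $A \cap \mathcal{C}_{t'} = \emptyset$), and exactly $r$ of them must populate $B$, leaving $k - 2r$ agents remaining within $S$. Non-negativity of this count gives $r \leq k/2$, and since $r$ is an integer, $r \leq \lfloor \frac{k}{2} \rfloor$.

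Finally, the edge case $r = k$ (where $S = \emptyset$) must be ruled out separately, since the previous argument is vacuous. Here all $k$ agents move from $A$ to $B$ along $k$ distinct tree edges, so $e_{AB} \geq k$; however, the forest bound on $F$ (which now has $2k$ vertices) forces $|F| \leq 2k - 1$, and combined with $|F| = 2(k-1) + e_{AB}$ this gives $e_{AB} \leq 1$, contradicting $k \geq 2$. The main obstacle, I expect, is noticing the correct inclusion--exclusion identity for $|F|$ and seeing that the forest-on-$S$ bound is exactly sharp enough to force $e_{AB} = 0$; once that step is in hand, the agent bookkeeping is routine.
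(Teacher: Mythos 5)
Your proof is correct, but it reaches the conclusion by a genuinely different route than the paper. The paper leans entirely on the uniqueness of paths in a tree: it first shows $\mathcal{C}_t \cap \mathcal{C}_{t'} \neq \emptyset$ by exhibiting two distinct paths between two vertices of $\mathcal{C}_t$ when the intersection is empty, and then shows that for every agent arriving at a new vertex $v_i^{t'} \in \mathcal{C}_{t'} \setminus \mathcal{C}_t$ its previous position $v_i^t$ must lie in $\mathcal{C}_t \cap \mathcal{C}_{t'}$ (again by a two-paths contradiction); since distinct agents occupy distinct vertices, this injects $B$ into $S$ and gives $r \leq k-r$ directly. You instead derive the key structural fact --- that no tree edge joins $A$ to $B$ --- by pure edge counting: the induced subgraphs on $\mathcal{C}_t$ and $\mathcal{C}_{t'}$ are trees with $k-1$ edges each, the union lives in a forest on $k+r$ vertices, and the inclusion--exclusion identity $|F| = 2(k-1) - e_{SS} + e_{AB}$ together with $e_{SS} \leq |S|-1$ forces $e_{AB}=0$; the agent bookkeeping then yields the same injection from $B$ into $S$. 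Both arguments are sound and both need the degenerate case $S=\emptyset$ (equivalently $r=k$) handled separately, which you do correctly via $e_{AB}\geq k$ versus the forest bound. The paper's argument is shorter and more directly exploits the tree axiom; yours is more mechanical and quantitative, and the counting identity makes explicit exactly how much slack the forest structure leaves (none), which is a minor expository advantage. No gap.
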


\begin{proof}
    We first prove that there must exist a vertex $v$ in $\mathcal{C}_t \cap \mathcal{C}_{t'}$.
    Suppose, on the contrary, that $\mathcal{C}_t \cap \mathcal{C}_{t'} = \emptyset$.
    In such a scenario, there exist two distinct paths between the vertices $v_i^t$ and $v_j^t$, where $v_i^t$ and $v_j^t$ are the positions of the agents $a_i$ and $a_j$, respectively, in the configuration $\mathcal{C}_t$. 
    Since the induced graph $G[\mathcal{C}_t]$ made by the vertices of $\mathcal{C}_t$ is connected, there exists a path between $v_i^t$ and $v_j^t$ that entirely lies in $G[\mathcal{C}_t]$.
    An alternative path proceeds via the edge $(v_i^t, v_i^{t'})$, followed by a path within $G[\mathcal{C}_{t'}]$ from $v_i^{t'}$ to $v_j^{t'}$, and concludes with the edge $(v_j^{t'}, v_j^t)$. The existence of two such distinct paths contradicts the fact that there is always a unique path between two vertices in a tree.

    Now consider a vertex $v_i^{t'} \in \mathcal{C}_{t'} \setminus \mathcal{C}_{t}$. We claim that $v_i^t \in \mathcal{C}_{t'}$.
    Otherwise, there would exist two paths between some $v \in \mathcal{C}_t \cap \mathcal{C}_{t'}$ and $v_i^{t'}$: one passing through $v_i^t$ and the other does not, which is again a contradiction.
    Hence, for every $v_i^{t'} \in \mathcal{C}_{t'} \setminus \mathcal{C}_{t}$, the corresponding $v_i^t \in \mathcal{C}_t \cap \mathcal{C}_{t'}$. In other words, for each new vertex of the current configuration ($\mathcal{C}_{t'}$), we always have a unique vertex that is part of both the current and previous configuration ($\mathcal{C}_t$). Thus the proof.
\qed \end{proof}

The above result implies that when $k \leq 3$, all transition edges are necessarily $1$-transition edges. 
For larger values of $k$, between the movements of one configuration to the other, the agents might encounter $k/2$ new vertices.
To simplify the transition structure and enable the design of an optimal algorithm, we consider a restricted version of $k$-HWP, where each transition edge must be a $1$-transition edge, regardless of the number of agents $(k)$.
We refer to this problem as $k$-\textit{agents restricted Hamiltonian walk problem} ($k$-RHWP). 
We present a polynomial-time algorithm, \textsc{$k$-RHWP-For-Tree}, that computes an optimal solution to the $k$-RHWP on trees.
The optimal length of the $k$-RHWP on $G$ is denoted by $h^r_k(G)$.
Note that $h^r_2(G) = h_2(G)$ and $h^r_3(G) = h_3(G)$.
We introduce subsequent definitions (Definition  \ref{def:k-RHWP}, \ref{def:encountered}
), which are limited to the tree.
\begin{definition}[Problem: $k$-RHWP]
\label{def:k-RHWP}
    Consider an undirected connected graph $G$ along with a collection of $k$ agents. The objective is to find a sequence of configurations of minimum length such that between every two consecutive configurations in the sequence there is a 1-transition edge, and every vertex of $G$ is included at least once in one of these configurations. 
\end{definition}
\begin{definition} [Explored edge or vertex]
\label{def:encountered}
Let $\mathcal{C}_t$ and $\mathcal{C}_{t'}$ be two adjacent configurations in a transition walk of $k$-RHWP; then there exists exactly one vertex $v_i^{t'} \in \mathcal{C}_{t'} \setminus \mathcal{C}_t$. We say that the edge $(v_i^t, v_i^{t'})$ is explored (or encountered) by the transition from $\mathcal{C}_t$ to $\mathcal{C}_{t'}$ or the vertex $v_i^{t'}$ is explored (or encountered).
\end{definition}

\noindent Next, we describe the algorithm $k$\textsc{-RHWP-For-Tree} for solving $k$-RHWP on trees.
We designate one of the agents as the head, who explores the graph. All other agents either follow the head or remain stationary to maintain connectivity.

\noindent \textbf{Description of the Algorithm} ($k$\textsc{-RHWP-For-Tree}): 
Let $P^* = P_{v_1^* v_2^*}$ be a longest path of $G$.
All vertices are initially unexplored.
Agents are initially deployed using the strategy \textsc{1-HWP-For-Tree}. The agent $a_i$ is placed at the vertex $v_i^0$, the $i$-th explored vertex by \textsc{1-HWP-For-Tree}. So, the initial configuration $\mathcal{C}_0$ equals $(v_1^0, v_2^0, \cdots, v_k^0)$, where $v_1^0 = v_1^*$.
We refer to the vertices $v_k^0$ and $v_1^0$ as the \textit{head} $(v_H)$ and the \textit{tail} $(v_T)$ of the current configuration, respectively.
Additionally, we apply the terms head and tail to agents at $v_H$ and $v_T$, respectively.
The head takes priority in exploring the graph.
If the head encounters a situation where it finds all its neighbours as explored, it locates the nearest occupied vertex that has at least one unexplored neighbour.
This vertex then becomes the new head, while the previous head is designated as the new tail.
If the head has multiple unexplored neighbours, it selects the vertex $v$, which does not lie on $P^*$, as its next target of movement and sets $parent(v) = v_H$.
If the neighbours of all the occupied vertices (including head) in the current configuration are explored, the agent $a_i$ closest to $P^*$ becomes the new head, while the previous head becomes the new tail.
In this case, agent $a_i$ decides its target as $parent(v')$, where $v'$ is the current position of $a_i$.
Once the head sets its target for the next movement, all the agents lying on the path between head and tail, including the tail, will move simultaneously in the direction of the head, while all other agents remain stationary.
If the tail is not a leaf vertex of the induced graph $G'$, which is formed by all occupied vertices, then moving the tail would disconnect $G'$.
In such a case, we assign the tail to one of the leaf vertices of $G'$, excluding both the head and the vertex nearest to $P^*$, if possible.
Once all the agents between $v_H$ and $v_T$ have moved to their designated targets, we update the head and tail accordingly. Whenever the vertex $v_2^*$ becomes the head, the algorithm terminates. The pseudocode is mentioned in Algorithm \ref{alg:k-HWP-Tree}.

\noindent \textbf{Execution Example of the Algorithm \textsc{k-RHWP-For-Tree}} Consider the graph as depicted in Fig. \ref{fig:exec-ex}. The vertex set is $V= \{v_1, v_2, \cdots, v_{18}\}$. Let $P^* = P_{v_1v_{18}}$. The solution returned by \textsc{1-HWP-For-Tree} is the walk $W_1 = \{v_1, v_2, v_3, $ $ v_4, v_5, v_6, v_7, v_6, v_8, v_9, v_8, v_6, v_5, v_{10}, v_{11}, v_{12},$ $ v_{13}, v_{12}, v_{14}, v_{12}, v_{11}, v_{10}, v_5, v_{15}, v_{16},$ $ v_{17}, v_{18}\}$. Now, we investigate the problem $4$-RHWP and the solution returned by the algorithm $k$\textsc{-RHWP-For-Tree}. The initial configuration is $\mathcal{C}_0 = (v_1, v_2, v_3,$  $ v_4)$, since these are the four vertices explored by $W_1$. $v_4$ is termed head and $v_1$ is termed tail. In the next step, the head $(v_4)$ moves to $v_5$ and all the other agents follow the head. After the movement, $\mathcal{C}_1 = (v_2, v_3, v_4, v_5)$. Similarly, we have $\mathcal{C}_2 = (v_6, v_5, v_4, v_3)$ and $\mathcal{C}_3 = (v_7, v_6, v_5, v_4)$. At $\mathcal{C}_3$, the current head $(v_7)$ does not have an unexplored neighbour. The head is transferred to $v_6$, which is the nearest occupied vertex with at least one unexplored neighbour. In addition, the previous head $v_7$ becomes the new tail. Now, the head sets the target to $v_8$. All agents between head and tail, which is in this case tail only, follow the head, that is, move to $v_6$. However, the other two agents remain stationary. After the movement, the configuration is $\mathcal{C}_4 = (v_6, v_8, v_5, v_4)$, where $v_8$ is the head and $v_6$ is the tail. Currently, the tail $v_6$ is the non-leaf vertex of the induced graph $G'$ formed by the set of occupied vertex $\{v_6, v_8, v_5, v_4\}$. In this scenario, the tail is transferred to $v_4$, which is the only leaf vertex (excluding the head $v_8$). After this, all agents set their target and reach the configuration $\mathcal{C}_5=\{v_8, v_9, v_6, v_5\}$. Similarly, we can have $\mathcal{C}_6 =\{v_6, v_8, v_5, v_{10}\}; \mathcal{C}_7 =\{v_5, v_6, v_{10}, v_{11}\}; \mathcal{C}_8 =\{v_{10}, v_5, v_{11}, v_{12}\}; \mathcal{C}_9 =\{v_{11}, v_{10}, v_{12}, v_{13}\}; \mathcal{C}_{10} =\{v_{11}, v_{10}, v_{14}, v_{12}\}$. In this situation, the head is at $v_{14}$, the tail is at $v_{12}$, and there is no occupied vertex with an unexplored neighbour. $v_10$, the nearest vertex to $P^*$ becomes the new head and sets its target as $parent(v_{10} = v_5$. We can similarly compute the other configurations, which are  
$\mathcal{C}_{11} =\{v_{10}, v_5, v_{12}, v_{11}\}; \mathcal{C}_{12} =\{v_5, v_{15}, v_{11}, v_{10}\}; \mathcal{C}_{13} =\{v_{15}, v_{16}, v_{10}, v_{5}\}; \mathcal{C}_{14} =\{v_{16}, v_{17}, v_{5}, v_{15}\}; \mathcal{C}_{15} =\{v_{17}, v_{18}, v_{15}, v_{16}\}$. In the last transition, the vertex $v_{18}$ is encountered, and so the algorithm terminates. The length of the transition walk is 15.
\begin{figure}[h]
    \centering   \includegraphics[width=1\linewidth]{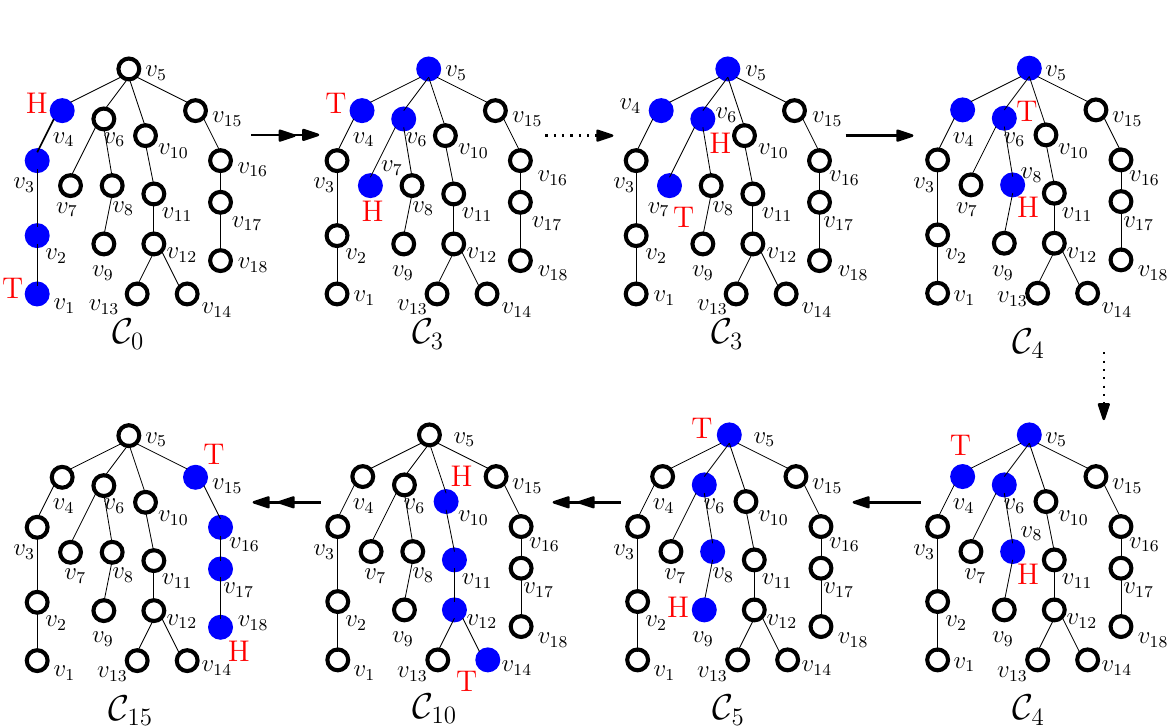}
    \caption{Illustrating a stepwise execution of $k$\textsc{-RHWP-For-Tree} with $P^* = P_{v_1v_{18}}$ and $k=4$. The \textcolor{blue}{blue} nodes represent the occupied vertices while \textcolor{red}{H} and \textcolor{red}{T} indicate the head and tail, respectively. $\mathcal{C}_i$ is the configuration at $i$-th step. Solid single arrows represent direct transitions between configurations; solid double arrows indicate intermediate configurations are omitted; dotted arrows represent the transposition of head and tail within a configuration}
    \label{fig:exec-ex}
\end{figure}

\noindent \textbf{Analysis of the Algorithm \textsc{k-RHWP-For-Tree:}}
We analyze the algorithm \textsc{k-RHWP-For-Tree} in the following after introducing some definitions, which are limited for the case of tree.
\begin{definition}
\label{def:d_P}
For a fixed path $P$ and set of $k$ agents, we define $d_P^k(e)$ for each edge $e = (u, v) \in E$ as follows: $d_P^k(e) = 0$,  $\forall e \in P$. For all the other edges $e \notin P$, let $u$ be the vertex closer to $P$ than $v$ and let $T_P(e)$ be the subtree rooted at $v$ after deleting $e$ from $G$. If the height of the tree $T_P(e)$ is at most $(k-2)$, we set $d_P^k(e) = 0$ otherwise $d_P^k(e) = 1$.
In other words, $d_P^k(e) = 1$, if $\exists z \in T_P(e)$ such that the distance between $u$ and $z$ is at least $k$.
\end{definition}

 For example, in Fig. \ref{fig:exec-ex}, consider the path $P=P_{v_1v_{18}}$ and $k = 4$. $d_P^4(e' =(v_5, v_{10})) = 1$, as the height of $T_P(e)$ rooted at $v_{10}$ is $3 > (k-2)$. For all other edges $e$, $d_P^4(e) = 0$.  $d_P^k(e) = 0$ indicates that the edge $e$ is encountered exactly once, while $d_P^k(e) = 1$ means that it is encountered more than once. The edge $(v_5, v_{10})$ is encountered twice by our algorithm; first during the transition from $\mathcal{C}_5$ to $\mathcal{C}_6$ and again during the transition from $\mathcal{C}_{10}$ to $\mathcal{C}_{11}$.

\begin{algorithm}
	\caption{k-\textsc{RHWP-For-Tree}}
	\label{alg:k-HWP-Tree}
	\begin{algorithmic}[1]
        \Statex{/* \textit{Initial deployment of the agents} */}
        \State{Find the longest path of the tree, say the path is $P^*$ with the endpoints $v_1^*$ and $v_2^*$}
        \For{every $v \in V$}
        \State{$explored(v) \gets FALSE$}
        \EndFor
        \State{Deploy the agent $a_1$ at $v_1^*$}
        \State{Use \textsc{1-HWP-For-Tree} to explore exactly $k$ vertices starting from $v_1^*$, let the vertex $v_i^0$ is the $i$-th explored vertex.}
        \State{Deploy the agent $a_i$ at the vertex $v_i^0$, $2 \leq i \leq k$}
        \State{Set $v_H \gets v_k^0$; $v_T \gets v_i^0 = v_1^* $}
        \Statex{/* Transition Part */}
        \While{$v_H \neq v_2^*$}
        \State{Let $V' \subseteq V$ be the set of currently occupied vertices by the $k$ agents, and let $G'$ be the corresponding induced subgraph of $G$}
        \Statex{/* Check whether there exists a vertex $v \in V'$ with one unexplored neighbour */}
        \If{($\exists~ v' \in N(v_H)$ with $explored(v') = FALSE$) or (($\forall~ v' \in N(v_H),~ explored(v') $ $= TRUE$) and ($\exists$ a vertex in $G'$ other than $v_H$ with one unexplored neighbour))}
        \If{$\nexists v' \in N(v_H)$ with $explored(v') = FALSE$}
        \State{Find the vertex $v$ nearest to $v_H$ in $G'$ with at least one unexplored neighbour}
        \State{$v_T \gets v_H;~ v_H \gets v$}
        \EndIf
        \If{$\exists~ v' \in N(v_H)$ with $explored(v) = FALSE$ and $v' \notin P^*$}
        \State{The agent, that lie on $v_H$, set its target $v_H^{tgt} = v'$}
        \Else \Comment{$\exists~ v' \in N(v_H)$ with $explored(v) = FALSE$ and $v' \in P^*$}
        \State{The agent, that lie on $v_H$, set its target $v_H^{tgt} = v'$}
        \EndIf
        \Else \Comment{all the neighbours of $v$ are explored, $\forall v \in V'$}
        \State{Find the vertex $v \in V'$ which is nearest to $P^*$}
        \If{$v \neq v_H$}
        \State{$ v_T \gets v_H; ~ v_H \gets v$}
        \EndIf
        \State{The agent, that lie on $v_H$, set its target $v_H^{tgt} = parent(v_H)$}
        \EndIf
        \State{The agent, that lie on $v_H$, moves to the target vertex $v^{tgt}_H$}
        \If{$v_T$ is not a leaf vertex of $G'$}
        \Comment{change the tail to a leaf in $G'$}
        \State{Find the set of leaves $L'$ in $G'$ and the vertex $u' \in L'$ that is nearest to $P^*$}
        \If{there exists a vertex $u'' \in L'$ other than $u'$}
        \State{$v_T \gets u''$}
        \Else
        \State{$v_T \gets u'$}
        \EndIf
        \EndIf
        \State{Let $P = \{v_H = v_1, v_2, \cdots, v_p=v_T\}$ be the path between $v_H$ and $v_T$}
        \Statex{/* All the agents on the path $P$ follow the head */}
        \State{The agent lie on $v_i$ sets its target $v_i^{tgt}=v_{i-1} (\forall i: 2 \leq i \leq p)$ and all the other agents not lying on $P$ remain stationary}
        \State{All the agents move to their respective targets in one step}
        \State{$explored(v_H^{tgt}) = TRUE$, $v_H \gets v_H^{tgt}$, $v_T \gets v_{p-1}$}
        \EndWhile
	\end{algorithmic}
\end{algorithm}
\begin{definition}
\label{def:P-Sol}
Let $Sol$ be an algorithm for $k$-RHWP. We define the path $P_{Sol}$ as the path between the vertices $x_{Sol}$ and $y_{Sol}$, where $y_{Sol}$ is the last vertex encountered or explored by the  $Sol$, and $x_{Sol}$ is the vertex in the initial configuration of $Sol$ that is farthest from $y_{Sol}$. 
\end{definition}

\begin{lemma}
    \label{lem:Correctness-k-HWP}
    The walk of configurations returned by the algorithm $k$\textsc{-RHWP-For-Tree} is a spanning walk of $G$ of length $t_{ALG}$ = $(n- k) + \sum_{e \in E} d^k_{P^*}(e)$, where $P^*$ is the longest path of $G$ between the vertices $v_1^*$ and $v_2^*$ selected by the algorithm.
\end{lemma}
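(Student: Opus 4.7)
The plan is to establish the claim in two stages: first that the algorithm produces a spanning walk, and second that its length equals $(n-k) + \sum_{e \in E} d^k_{P^*}(e)$. For the spanning property, I would argue that the algorithm terminates only when $v_H = v_2^*$ and that termination is inevitable: at each iteration of the main loop either (i) the head moves to an unexplored neighbour, strictly decreasing the count of unexplored vertices, or (ii) the head is transferred to a closer ancestor (occupied vertex) that still has an unexplored neighbour, or (iii) all neighbours of occupied vertices are explored, in which case the vertex nearest to $P^*$ becomes the new head and moves along $P^*$ toward $v_2^*$. A simple potential-function argument (lexicographic on ``unexplored vertices'' and then ``distance of the head from $v_2^*$ along $P^*$'') shows the algorithm halts, and since \textsc{1-HWP-For-Tree} already traverses all of $P^*$, whenever the head reaches $v_2^*$ every vertex has been explored.

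For the length formula I would count transitions by edges. Each transition is a $1$-transition edge and (by Definition~\ref{def:encountered}) encounters exactly one edge of $G$, so the walk length equals the sum, over edges $e \in E$, of the number of times $e$ is encountered. Let $\gamma$, $\alpha$, $\beta$ be the number of edges encountered $0$, exactly once, and at least twice, respectively, so that $\gamma+\alpha+\beta = n-1$. The initial deployment uses \textsc{1-HWP-For-Tree} to place $k$ agents on the first $k$ vertices explored starting from $v_1^* \in P^*$; these $k$ vertices induce a connected subtree whose $k-1$ internal edges are never traversed by any transition, so $\gamma = k-1$. It remains to prove $\beta = \sum_{e \in E} d^k_{P^*}(e)$ and that no edge is encountered more than twice; then the length is $\alpha+2\beta = (n-1-\gamma)+\beta = (n-k) + \sum_{e} d^k_{P^*}(e)$.

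The core task is therefore to characterise edges with a second encounter. For $e \in P^*$ I would show the head enters $P^*$-edges only in the forward direction (the priority rule in Steps~16--20 sends the head to off-$P^*$ neighbours first, and once all neighbours are explored the head transfers to an occupied vertex nearer $P^*$ rather than backtracking along $P^*$); so $e$ is encountered exactly once, matching $d^k_{P^*}(e)=0$. For $e=(u,v)\notin P^*$ with $u$ closer to $P^*$, the edge is first encountered when the head crosses it from $u$ into $T_{P^*}(e)$. After the subtree is fully explored, the head must be reassigned to an occupied vertex outside $T_{P^*}(e)$. I would argue that, because the tail is always maintained at a leaf of the induced subgraph $G'$ preferring vertices farther from $P^*$, the agent chain stretches as far down $T_{P^*}(e)$ as possible: if the height of $T_{P^*}(e)$ is at most $k-2$, at least one agent remains on the $P^*$-side of $e$ throughout, so head-reassignment and the subsequent ``follow-the-head'' shift never require any agent to cross $e$ again; if the height exceeds $k-2$, then at the deepest moment all $k$ agents lie within $T_{P^*}(e)$ and the chain contraction forces the tail (and hence exactly one crossing of $e$) back toward $P^*$. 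Finally I would observe that once this retreat occurs, no agent ever re-enters $T_{P^*}(e)$, because each vertex of the subtree is already explored and the head-transfer rule always chooses vertices with unexplored neighbours; hence $e$ is encountered exactly twice.

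The main obstacle is the combinatorial argument in the last paragraph: precisely matching the ``height $>k-2$'' threshold to the dynamics of the head-transfer and tail-leaf-maintenance rules (Steps~11--15 and 28--33) requires carefully tracking the shape of the induced subgraph $G'$ along the execution, especially distinguishing the moment of maximum descent into a subtree from the moment when the head is transferred out of it. Once this invariant on $G'$ is proved, the rest is a direct counting argument using the decomposition $\gamma + \alpha + \beta = n-1$ together with the initial deployment giving $\gamma=k-1$.
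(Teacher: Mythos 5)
Your proposal is correct and follows essentially the same route as the paper's proof: establish the spanning property from the ``backtrack only when every neighbour of an occupied vertex is explored'' rule, then count the walk length by edge encounters, showing that the $k-1$ edges of the initial configuration are pre-covered, that $P^*$-edges and off-$P^*$ edges whose hanging subtree has height at most $k-2$ are encountered exactly once (via the invariant that the endpoint nearest $P^*$ is never made the tail, so the edge stays inside the configuration), and that edges with $d^k_{P^*}(e)=1$ are encountered exactly twice. The ``main obstacle'' you flag --- rigorously matching the height-$(k-2)$ threshold to the head-transfer and tail-relocation dynamics --- is precisely the step the paper itself treats only for an extreme case with a worked example, so your plan is at the same level of completeness as the published argument.
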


\begin{proof}
    The initial deployment of the agents encountered exactly $k$ vertices, which is assured by the correctness of Algorithm \textsc{1-HWP-For-Tree}. Similar to Algorithm \textsc{1-HWP-For-Tree}, we only move backward to the parent vertex when all the neighbours of the $k$-agents are explored, which confirms that each vertex must be in either the initial configuration or encountered by some subsequent configuration.
    To prove the second part of the lemma, it is sufficient to prove that any edge with $d^k_{P^*}(e) = 0$ encountered exactly once and an edge with $d^k_{P^*}(e) = 1$ encountered exactly twice. First, we will prove that any edge $e = (u, v)$ lies on $P^*$ encountered exactly once, where $v$ is nearer to $v_2^*$ than $u$. Let $v$ be explored for the first time by the configuration $\mathcal{C}_t$. Since an agent (head) lies at $u$ moves to $v \in P^*$ only when all other neighbours of $u$ are already explored. As the final target is $v_2^*$ and $v$ is nearer to $v_2^*$ than $u$, no agent being a head ever moves back to $u$ from $v$. Therefore, the edge $e$ is encountered exactly once. Now, we will prove that any edge $e=(u, v) \notin P^*$ can be encountered at most twice by the algorithm. The edge $e$ is encountered when we reach a configuration with the head as $u$, and then it sets its next target for the movement as $v$. After this, if the algorithm reaches a configuration with the head as $v$ and all the vertex in the configuration are explored, then at the next configuration, it encountered $e$ again by moving the agent lie at $v$ to $u$, which is the $parent(u)$. Afterward, we can not have a configuration with an agent lying on $u$ as the head and $v$ as its next target, as $v$ has already been explored. So, the algorithm counters an edge at most twice. Next we show that, an edge $e$, with $d^k_{P^*}(e) = 0$, encountered exactly once. We will discuss the extreme case. A similar argument will follow for the other case. Let us consider an edge $e = (x, y)$ such that the height of the sub-tree $T_P(e)$ rooted at $y$ is exactly $(k-2)$. Let $e$ be first encountered by the configuration $\mathcal{C}_t$. All the later configurations that encountered the other vertices of $T_P(e)$ must contain the edge $e$, as we never make the vertex $x$ (which is nearest to $P^*$ among all the occupied vertices) as the tail ($v_T$) of those configurations (Step 33-34). Let's consider an example as referred in Fig. \ref{fig:ex-1-Tree}. Let $k=4$, $\mathcal{C}_t=(z, x, y_4, y_6)$ be the $t$-th configuration returned by the Algorithm $k$-\textsc{HWP-For-Tree}, where $v_H = y_6, v_T = z$. Then, if we apply the algorithm we encountered $e=(x, y)$ at the next configuration $\mathcal{C}_{t+1} = (z, y, x, y_4)$ with $v_H = y, v_T = y_4$. The subsequent configurations are $\mathcal{C}_{t+2} = (z, y_1, y, x)$ with $v_H = y_1, v_T = x$; $\mathcal{C}_{t+3} = (z, y, y_2, x)$ with $v_H = y_2, v_T = y$; $\mathcal{C}_{t+4} = (x, y_2, y_5, y)$ with $v_H = y_5, v_T = x$; $\mathcal{C}_{t+5} = (x, y_3, y_2, y)$ with $v_H = y_2, v_T = x$. This shows that all the configurations contain the edge $e(x, y)$. In the next configuration $v_H$ changes to $x$ and $v_T$ changes to $y_3$ and moves to the next configuration  $\mathcal{C}_{t+6} = (z, y_2, y, x)$ with $v_H = z, v_T = y_2$. Thus the lemma follows, i.e., the length of the walk is $(n-1) - (k-1) - \sum_{e \in P^*}d^k_{P^*}(e)$, $(n-1)$ referred as the number edges to be encountered, $(k-1)$ is subtracted as those many are covered in the initial configuration and the last part say that those edges are encountered twice.
\qed \end{proof}

\begin{figure}[h]
\centering
\begin{minipage}[b]{0.52\linewidth}
\centering
    \includegraphics[width=0.7\linewidth]{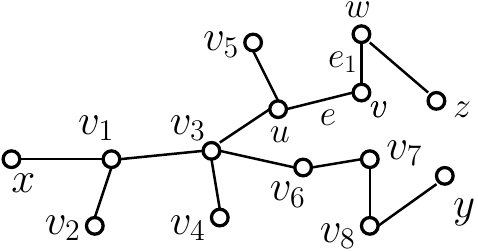}
    \caption{$P_{xy}$ is one of the longest paths.}
    \label{fig:define-d-Tree}
\end{minipage}
\hfill
\begin{minipage}[b]{0.46\linewidth}
    \centering
        \includegraphics[width=0.5\linewidth]{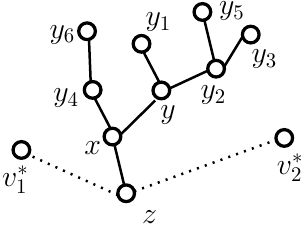}
        \caption{$v_1^*$ and $v_2^*$ are the endpoints of one of the longest paths of the tree.}
        \label{fig:ex-1-Tree}
\end{minipage}
\end{figure}

\begin{lemma}
    \label{lem:time-k-RHWP-for-tree}
    The time needed for the Algorithm \textsc{k-RHWP-For-Tree} is $O(n \cdot k \cdot \Delta)$, where $k$ is the number of agents, $n $ is the number of vertices, and $\Delta$ is the maximum degree of the tree $G$.
\end{lemma}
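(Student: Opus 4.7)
The plan is to break the running time into three components: preprocessing, per-iteration cost of the main while loop, and the total number of iterations, and then multiply them appropriately.

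First I would handle preprocessing. Computing the longest path $P^*$ of a tree takes $O(n)$ time via two rounds of BFS/DFS. The initial deployment phase runs \textsc{1-HWP-For-Tree} only until $k$ vertices have been explored; by the same reasoning as Remark \ref{rem:time-1-HWP-tree}, this costs $O(k \cdot \Delta)$. So the total preprocessing cost is $O(n) + O(k \cdot \Delta)$, which is absorbed in the final bound.

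Next I would bound the number of iterations of the main while loop (Steps 9--38). Each iteration produces exactly one new configuration in the output transition walk, so the number of iterations equals the length of the walk plus one. By Lemma \ref{lem:Correctness-k-HWP}, this length equals $(n-k) + \sum_{e \in E} d^k_{P^*}(e) \leq (n-k) + (n-1) = O(n)$. Hence the while loop executes $O(n)$ times.

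Then I would bound the cost of a single iteration. Maintaining the set $V'$ of occupied vertices and the induced subgraph $G'$ is $O(k)$, given that the algorithm maintains agent positions explicitly. Checking whether the head $v_H$ has an unexplored neighbor requires scanning $N(v_H)$, costing $O(\Delta)$. If the head has no unexplored neighbor, finding the nearest occupied vertex with an unexplored neighbor requires a traversal of $G'$ (which has $k$ vertices) and checking the neighborhood of each candidate, costing $O(k \cdot \Delta)$; the same bound applies to the leaf-based tail reassignment in Steps 28--34, since we can identify leaves of $G'$ in $O(k)$ and compare distances to $P^*$ in $O(k)$. The path $P$ between $v_H$ and $v_T$ in Step 35 lies entirely inside the connected induced subgraph $G'$, so it has at most $k$ vertices and can be extracted in $O(k)$. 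Finally, setting targets and performing the synchronous move in Steps 36--38 costs $O(k)$. The dominant per-iteration cost is therefore $O(k \cdot \Delta)$.

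Combining these, the overall running time is $O(n) + O(k \cdot \Delta) + O(n) \cdot O(k \cdot \Delta) = O(n \cdot k \cdot \Delta)$, as claimed. The main obstacle is the per-iteration accounting for the ``nearest occupied vertex with an unexplored neighbor'' search and the tail relocation, since a naive implementation could be worse; the argument relies on the fact that $G'$ has exactly $k$ vertices and is connected, so all relevant searches are confined to a structure of size $O(k)$ with per-vertex work $O(\Delta)$.
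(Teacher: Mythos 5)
Your proposal is correct and follows essentially the same decomposition as the paper's proof: $O(n)$ preprocessing for the longest path and initial deployment, $O(k\cdot\Delta)$ per transition (dominated by scanning the neighbourhoods of up to $k$ occupied vertices), and $O(n)$ transitions by Lemma \ref{lem:Correctness-k-HWP}. The only cosmetic difference is that you bound $\sum_{e\in E} d^k_{P^*}(e)$ by $n-1$ where the paper uses $n-\mathrm{diam}(G)$, which does not affect the asymptotic result.
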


\begin{proof}
    The diameter of a tree can be computed in $O(n)$ time using the depth-first search (DFS) twice, which is a very standard process.
    First, perform a DFS from an arbitrary node (considered the root) to identify the node that is furthest from it. Then, perform a second DFS starting from that furthest node. The maximum distance discovered in this second traversal is the diameter of the tree.
    From Remark \ref{rem:time-1-HWP-tree}, the initial deployment can also be completed in $O(n)$ time.
    Thereafter, at each step, we need to identify an occupied vertex that has at least one unoccupied neighbour.
    For each such vertex $v$, we can check whether it has any unexplored neighbours—or determine that all its neighbours are already explored—in $O(\Delta)$ time, where $\Delta$ is the maximum degree of the graph.
    In the worst case, none of the occupied vertices have unexplored neighbours.
    In that case, we have to check for all the occupied vertices starting from the head, which requires $O(k\Delta)$ time, as there are $k$ many occupied vertices in each step (configuration).
    In addition, if the tail is not located at the the leaf of $G'$, the induced subgraph made by the occupied vertices, then we have to transfer the tail. 
    The construction of $G'$ and the execution of the transfer (if needed) again take $O(k)$ time.
    Therefore, each transition from one configuration to the next can be completed in at most $O(k\Delta)$ time.
    From Lemma \ref{lem:Correctness-k-HWP}, the algorithm terminates after at most $(n- k) + \sum_{e \in E} d^k_{P^*}(e) \leq (n-k)+ (n-diam(G)) < 2n$ steps. The total time complexity of the algorithm is $O(n \cdot k \cdot \Delta)$.
\qed \end{proof}

\begin{lemma}
    \label{lem:lower-Bound_k-Tree}
    Let $t_{Opt}$ be the length of an open spanning walk of configurations with $k$ agents, returned by an optimal algorithm $Opt$. Then $t_{Opt} \geq (n-k) + \sum_{e \in E} d^k_{P_{Opt}}(e).$ 
\end{lemma}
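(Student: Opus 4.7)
My plan is to split $t_{Opt}$ into the number of \emph{exploratory} transitions (those where the entering vertex has never appeared in any earlier configuration) and the number of \emph{re-exploratory} ones, which I call $R$. Since $k$-RHWP requires every step to be a $1$-transition edge, each step adds exactly one vertex to $\bigcup_{s\leq t}\mathcal{C}_s$, so the walk produces exactly $n-k$ exploratory transitions and $t_{Opt}=(n-k)+R$. The inequality then reduces to $R\geq\sum_{e\in E}d^k_{P_{Opt}}(e)$, i.e., to exhibiting at least one re-exploratory step for each edge $e$ with $d^k_{P_{Opt}}(e)=1$, all charged injectively.

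Next, I would attach to each such edge $e=(u,v)\notin P_{Opt}$ a specific step $\gamma(e)$. Writing $T:=T_{P_{Opt}}(e)$, the hypothesis $d^k_{P_{Opt}}(e)=1$ guarantees a vertex $z\in T$ with $\mathrm{dist}(u,z)\geq k$. Pick any $\tau$ with $z\in\mathcal{C}_\tau$. Because $\mathcal{C}_\tau$ induces a connected subgraph of a tree on $k$ vertices, its diameter is at most $k-1$; hence $u\notin\mathcal{C}_\tau$, and since $e$ is the unique bridge between $T$ and $V\setminus T$, connectedness forces $\mathcal{C}_\tau\subseteq T$. Both $x_{Opt}\in\mathcal{C}_0$ and $y_{Opt}\in\mathcal{C}_{t_{Opt}}$ lie on $P_{Opt}$, hence outside $T$, so $\gamma(e):=\min\{t>\tau:\mathcal{C}_t\not\subseteq T\}$ is well-defined and at most $t_{Opt}$. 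By the $1$-transition property, the only vertex of $V\setminus T$ that could enter at step $\gamma(e)$ is $u$, so $u$ enters. A separate check --- using $x_{Opt}\in P_{Opt}\cap\mathcal{C}_0\setminus T$ together with connectedness of $\mathcal{C}_0$, or a look at the first step at which a vertex of $T$ appears in any $\mathcal{C}_t$ --- shows that $u$ belonged to some earlier $\mathcal{C}_s$ with $s<\gamma(e)$. Thus step $\gamma(e)$ is re-exploratory.

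Finally, I would verify that $e\mapsto\gamma(e)$ is injective on the $d=1$ edges. If $T_{P_{Opt}}(e)$ and $T_{P_{Opt}}(e')$ are vertex-disjoint, then $\mathcal{C}_{\gamma(e)-1}\subseteq T_{P_{Opt}}(e)$ and $\mathcal{C}_{\gamma(e')-1}\subseteq T_{P_{Opt}}(e')$ cannot hold simultaneously, forcing $\gamma(e)\neq\gamma(e')$. If they are nested, say $T_{P_{Opt}}(e')\subsetneq T_{P_{Opt}}(e)$, then the entering vertex $u_{e'}$ at step $\gamma(e')$ lies inside $T_{P_{Opt}}(e)$, so $\mathcal{C}_{\gamma(e')}\subseteq T_{P_{Opt}}(e)$ and $\gamma(e')$ is not an exit from $T_{P_{Opt}}(e)$; again $\gamma(e)\neq\gamma(e')$. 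Summing the distinct contributions yields $R\geq\sum_{e\in E}d^k_{P_{Opt}}(e)$, and combining with $t_{Opt}=(n-k)+R$ gives the claimed bound. I expect the injectivity bookkeeping --- especially in the nested case, where one must carefully identify which bridge is being crossed at the exit step --- to be the main technical obstacle.
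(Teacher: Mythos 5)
Your proposal is correct and follows essentially the same route as the paper: both decompose $t_{Opt}$ as $(n-k)$ plus a number of ``extra'' steps and, for each edge $e$ with $d^k_{P_{Opt}}(e)=1$, use the deep vertex $z$ at distance at least $k$ from $u$ to trap an entire configuration inside $T_{P_{Opt}}(e)$, which forces a second crossing of $e$ before the walk can terminate at $y_{Opt}\notin T_{P_{Opt}}(e)$. The only difference is bookkeeping: the paper counts edge encounters (exactly one per transition, which makes the disjointness of the charged steps automatic), whereas you count globally-new-vertex explorations and therefore must supply the explicit injectivity argument for $\gamma$ by hand --- a detail the paper's accounting sidesteps.
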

\begin{proof}
    Any initial configuration of $k$-HWP covers exactly $(k-1)$ many edges of $G$. So, there are exactly $(n-1-(k-1)) = (n-k)$ unexplored edges, which need to be explored sequentially and require at least $(n-k)$ steps. 
    Consider all the edges $e$ with $d^k_{P_{Opt}}(e) = 1$. To prove the lemma, it is sufficient to prove that all such edges $e$ must be encountered at least twice by the algorithm $Opt$.
    Let $e = (u, v)$, where $u$ is the vertex nearer to $P_{Opt}$ than $v$.
    By Definition \ref{def:d_P}, there must exist a leaf $z$ such that the distance between $u$ and $z$ is at least $k$.
    Let the edge $e$ be encountered in some configuration $(\mathcal{C}_t)$ when an agent lies at $u$ and moves to $v$. Let $w$ be the vertex adjacent to $z$, and let $\mathcal{C}_{t'}$ be the configuration that encountered the edge $(w, z)$ for the first time. Then $t' > t$ must hold. After $\mathcal{C}_{t'}$, the edge $e$ is again encountered, as the algorithm $Opt$ terminates after exploring the vertex $y_{Sol}$. Thus, it must cross the edge $e$ while moving to $y_{Sol}$ after $\mathcal{C}_{t'}$.
    
    For example, consider Fig. \ref{fig:define-d-Tree}. Let $P_{Opt} = P_{xy}$ and $k = 3$. We have $d^k_{P_{Opt}}(e) = 1$. Let, the edges $e$ and $(u,v)$ encountered for the first time by the configurations $\mathcal{C}_t$ and $\mathcal{C}_{t'}$, respectively. Then the inequality $t'\geq t+2$ must hold and $\mathcal{C}_{t'}$ consists of the vertices $\{v, w, z\}$. The edge $e$ is encountered again after $\mathcal{C}_{t'}$, as the last encountered vertex is $y$. 
\qed \end{proof}

    \begin{lemma}
        \label{lem:optimal-k-HWP-Tree}
        For a tree $G$, $h_k^r(G) \geq (n-k) + \sum_{e \in E}d^k_{P^*}(e)$, where $P^*$ is the longest paths of the tree $G$ joining two vertices $v_1^*$ and $v_2^*$, which is selected by the Algorithm \textsc{k-RHWP-For-Tree}.
    \end{lemma}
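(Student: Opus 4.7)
The idea is to combine Lemma \ref{lem:lower-Bound_k-Tree} with a structural claim asserting that $\sum_{e \in E} d^k_P(e)$, viewed as a function of the simple path $P$, is minimized over all simple paths of the tree when $P$ is a longest path. Applying Lemma \ref{lem:lower-Bound_k-Tree} to any optimal algorithm $Opt$ for $k$-RHWP yields $h^r_k(G) = t_{Opt} \geq (n-k) + \sum_{e \in E} d^k_{P_{Opt}}(e)$, where $P_{Opt}$ is the path determined by $Opt$'s initial and final configurations (Definition \ref{def:P-Sol}). Since the target bound uses the specific longest path $P^*$ produced by the algorithm, it suffices to prove the structural inequality $\sum_e d^k_{P_{Opt}}(e) \geq \sum_e d^k_{P^*}(e)$.

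To establish the structural inequality, I would first reinterpret $\sum_e d^k_P(e)$ combinatorially: by Definition \ref{def:d_P}, it equals the number of vertices $v \notin P$ such that the subtree of $v$ pointing away from $P$ has depth at least $k-1$. I would then prove monotonicity under path-lengthening. If $P$ is not a longest path, then either (a) an endpoint of $P$ is not a leaf of $G$ and $P$ extends by one edge to a longer path $P'$, or (b) both endpoints of $P$ are leaves but some internal vertex $v$ of $P$ has a pendant branch strictly deeper than one of the two tails of $P$ emanating from $v$, in which case $P$ reroutes at $v$ to a longer path $P'$. In case (a), only the $d^k$-value of the added edge can change (from possibly $1$ to $0$), so the sum weakly decreases. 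In case (b), I would argue by a local pairwise comparison of subtrees around the pivot $v$ that every edge newly contributing to $\sum_e d^k_{P'}(e)$ is matched by an edge that has ceased to contribute to $\sum_e d^k_P(e)$, using that the branch swapped in is strictly deeper than the tail swapped out. Iterating these modifications transforms $P$ into a longest path, and since any two longest paths of a tree share the same profile of hanging-subtree depths, the final value coincides with $\sum_e d^k_{P^*}(e)$.

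The main obstacle will be the pivoting case (b), where multiple edges can change their $d^k$-value simultaneously. The key fact driving the bookkeeping is that $d^k_P(e)$ depends only on the depth of the hanging subtree at the $P$-side endpoint of $e$, and the rerouting operation only alters these subtrees in a controlled region near the pivot $v$; edges far from $v$ see no change. Once the structural inequality is in hand, chaining it with Lemma \ref{lem:lower-Bound_k-Tree} delivers the stated lower bound $h^r_k(G) \geq (n-k) + \sum_{e \in E} d^k_{P^*}(e)$.
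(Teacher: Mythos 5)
Your skeleton --- invoke Lemma \ref{lem:lower-Bound_k-Tree} to get $h^r_k(G) \geq (n-k)+\sum_{e} d^k_{P_{Opt}}(e)$ and then reduce everything to the structural inequality $\sum_{e} d^k_{P_{Opt}}(e) \geq \sum_{e} d^k_{P^*}(e)$ --- is exactly the paper's. The divergence is in how that inequality is established: the paper compares the two specific paths $P_{Opt}$ and $P^*$ directly, using that in a tree they intersect in a single subpath $P_{uv}$ and telescoping the counts over the four arms to obtain $len(P^*)-len(P_{Opt})\geq 0$; you instead try to show that $\sum_e d^k_P(e)$ never increases under local path-lengthening moves. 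Your case (a) is sound, since $T_{P'}(e)=T_P(e)$ for every $e\notin P'$ when $P'$ extends $P$ by one edge. But the matching claim in your case (b) is false, and this is a genuine gap.

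Concretely, take $k=4$ and let $v$ carry three branches: a bare path of length $10$ ending at a leaf $a$; a bare path $B$ through $r_1,r_2,r_3,r_4$ of length $4$; and a path through $p_1,p_2$ to a leaf $b$ of length $3$, with a pendant path $q_i,q_i',q_i''$ of length $3$ attached at each of $p_1$ and $p_2$. For the leaf-to-leaf path $P=P_{ab}$ (length $13$), the only edge with $d^4_P(e)=1$ is $(v,r_1)$, because $d(v,r_4)=4\geq k$ while each subtree rooted at a $q_i$ has height exactly $2=k-2$; so $\sum_e d^4_P(e)=1$. Your pivot at $v$ swaps the tail $P_{vb}$ (length $3$) for the strictly deeper branch $B$ (depth $4$), producing $P'=P_{ar_4}$ of length $14$. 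Now both $(v,p_1)$ and $(p_1,p_2)$ contribute, since $q_2''$ lies at distance $\geq k$ from each of $v$ and $p_1$, while only $(v,r_1)$ has ceased to contribute: $\sum_e d^4_{P'}(e)=2$, a strict increase. The failure mode is that edges of the discarded tail can acquire value $1$ not from the tail's residual length but from subtrees hanging off its internal vertices that were individually too shallow (height exactly $k-2$) to contribute under $P$; the number of such newly contributing edges is not controlled by $\ell_B-\ell_b$ or by the number of edges of $B$ that stop contributing. (In this example a further pivot at $v$ does reach the longest path $P_{aq_2''}$ with sum $1$, so the final inequality survives, but your monotone potential argument does not.) To repair the proof you would need either the paper's global two-path comparison or a genuinely different potential function; the stepwise local-improvement scheme as stated cannot be salvaged by choosing pivots more carefully, since nothing in your argument forbids the adversarial configuration above.
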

    \begin{proof}
        Let $Opt$ be the optimal algorithm and $P_{xy}$ be the path between the vertices $x$ and $y$, where $x$ is the last explored vertex, and $x$ is the vertex in the initial configuration, which is farthest from $y$.
        Then $h_k^r(G) = t_{Opt} \geq (n-k) + \sum_{e \in E}d^k_{P_{xy}}(e)$, by Lemma \ref{lem:lower-Bound_k-Tree}.
        It is sufficient to prove that $\sum_{e \in E}d^k_{P_{xy}}(e) - \sum_{e \in E}d^k_{P^*}(e) \geq 0$.
        Since $G$ is a tree, there must exist exactly one sub-path that intersects both $P_{xy}$ and $P^*$, say $P_{uv}$. We assume that $len(P) - (k-1) \geq 0, \forall P \in \{P_{xu}, P_{v_1^*u}, P_{vy}, P_{vv_2^*}\}$. We set the corresponding value as zero if any of them is negative. All edges that lie on the path $P_{xu}$ must have the corresponding $d^k_{P_{xy}}(.)$ value is zero. However, the associated $d^k_{P^*}(.)$ may not be zero. Although there exist exactly $len(P_{xu}) - (k-1)$ number of edges on the path $P_{xu}$ with associated $d^k_{P^*}(.)$ value as one, as these are the edges $e$ for which the height of the rooted tree $T_{P_{OPT}}(e)$ is more than $(k-2)$. On the other hand, all the edges that lie on the path $P_{v_1^*u}$ must have corresponding $d^k_{P^*}(.)$ value is zero, but there exists an exact $len(P_{v_1^*u}) - (k-1)$ number of edges on the path $P_{v_1^*u}$ with associated $d^k_{P_{xy}}(.)$ value as one. Similar arguments are also followed when considering the path $P_{vy}$ and $P_{vv_2^*}$. Thus $\sum_{e \in E}d^k_{P_{OPT}}(e) - \sum_{e \in E}d^k_{P^*}(e) = (len(P_{v_1^*u}) - (k-1)) - (len(P_{xu} - (k-1)) + (len(P_{vv_2^*}) - (k-1)) - (len(P_{vy}) - (k-1)) = len(P_{v_1^*v_2^*}) - len(P_{xy}) \geq 0$, as $P^* = P_{v_1^*v_2^*}$ is the longest path among all the paths of $G$.
    \qed \end{proof}

    The above analysis shows that the algorithm \textsc{k-RHWP-For-Tree} returns a spanning walk of $G$ whose length matches the lower bound of $h_k^r(G)$. Therefore, combining Lemmas~\ref{lem:Correctness-k-HWP}, \ref{lem:time-k-RHWP-for-tree}, \ref{lem:lower-Bound_k-Tree}, and \ref{lem:optimal-k-HWP-Tree}, we establish the following theorem.

\begin{theorem}
\label{thm:OptimalBound_k-RHWP_Tree}
    Algorithm \textsc{k-RHWP-For-Tree} provides an optimal solution for k-RHWP in polynomial time, when the input graph $G$ is a tree.
\end{theorem}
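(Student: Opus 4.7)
The plan is to observe that the theorem is essentially a packaging of the four preceding lemmas, so the work consists of assembling the upper bound on the algorithm's output length, the matching lower bound on $h_k^r(G)$, and the running-time estimate into one statement.

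First I would invoke Lemma \ref{lem:Correctness-k-HWP} to conclude that the sequence of configurations produced by \textsc{k-RHWP-For-Tree} is indeed a valid transition walk of $k$-RHWP on $G$ (each pair of consecutive configurations differs by a $1$-transition edge, every vertex is visited at least once), and that its length equals
\[
t_{\text{ALG}} \;=\; (n-k) + \sum_{e \in E} d^k_{P^*}(e),
\]
where $P^*$ is the longest path selected by the algorithm. This gives $h_k^r(G) \le t_{\text{ALG}}$.

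Next I would derive the matching lower bound. Lemma \ref{lem:lower-Bound_k-Tree} yields $t_{Opt} \ge (n-k) + \sum_{e \in E} d^k_{P_{Opt}}(e)$ for any optimal algorithm $Opt$ and its associated path $P_{Opt}$, while Lemma \ref{lem:optimal-k-HWP-Tree} upgrades this to $h_k^r(G) \ge (n-k) + \sum_{e \in E} d^k_{P^*}(e)$ by showing that replacing $P_{Opt}$ by the algorithm's longest path $P^*$ can only decrease the $d^k_{(\cdot)}(\cdot)$ sum. Combining this with the upper bound from the previous step gives $h_k^r(G) = t_{\text{ALG}}$, so the walk returned by \textsc{k-RHWP-For-Tree} is optimal.

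Finally, I would appeal to Lemma \ref{lem:time-k-RHWP-for-tree}, which guarantees that the algorithm runs in $O(n \cdot k \cdot \Delta)$ time, a polynomial in the input size, to conclude the polynomial-time claim. Since all three ingredients have already been established, there is no real obstacle left; the only thing worth re-checking carefully during write-up is that Lemma \ref{lem:optimal-k-HWP-Tree} is applied to the same path $P^*$ that the algorithm actually selects in Lemma \ref{lem:Correctness-k-HWP}, so that the lower and upper bounds literally coincide rather than merely having the same form. Once that alignment is verified, the theorem follows immediately.
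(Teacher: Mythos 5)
Your proposal is correct and matches the paper's own argument, which likewise establishes the theorem by combining Lemma \ref{lem:Correctness-k-HWP} (upper bound and validity of the walk), Lemmas \ref{lem:lower-Bound_k-Tree} and \ref{lem:optimal-k-HWP-Tree} (matching lower bound on $h_k^r(G)$ with respect to the same path $P^*$), and Lemma \ref{lem:time-k-RHWP-for-tree} (polynomial running time). No further commentary is needed.
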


\section{Algorithms for Arbitrary Graph in Case of k=2}   
\label{sec:2-HWP}
Here, we present an approximation algorithm for the 2-HWP on an arbitrary graph $G$. We start by introducing a simple algorithm that achieves an asymptotic approximation factor of 3, using the relationship between $h_1(G)$ and $h_2(G)$.

\begin{lemma}
    \label{lem:rel-h1-h2}
    $h_1(G) \leq 2 \cdot h_2(G)+1.$
\end{lemma}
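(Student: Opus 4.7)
The plan is to construct an explicit open spanning walk for a single agent by simulating an optimal 2-agent walk. Let $\mathcal{C}_0, \mathcal{C}_1, \ldots, \mathcal{C}_l$ be a transition walk realizing $l = h_2(G)$, and write $\mathcal{C}_t = (u_t, v_t)$. Since $k = 2$ and the induced subgraph on each configuration must be connected, every $\mathcal{C}_t$ is an edge, i.e.\ $(u_t, v_t) \in E$.

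I would then define a single-agent walk by zig-zagging between the two ``tracks'' $(u_t)_t$ and $(v_t)_t$. Start at $u_0$ and cross the in-configuration edge to $v_0$ (one step). For each $t = 0, 1, \ldots, l-1$, alternate the track being mirrored: the agent sits at an endpoint of $\mathcal{C}_t$, so first move along the corresponding track to the matching endpoint of $\mathcal{C}_{t+1}$ (at most one step by adjacency of consecutive configurations, or zero if the two endpoints coincide), and then cross the in-configuration edge of $\mathcal{C}_{t+1}$ to its other endpoint (one step, since $(u_{t+1}, v_{t+1}) \in E$).

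Finally I would verify (i) every vertex appearing in some $\mathcal{C}_t$ is visited by the single agent, because for each $t$ both endpoints $u_t$ and $v_t$ are passed through; and (ii) the walk uses at most $1 + 2l$ steps, because the initial in-configuration crossing costs one step and each of the $l$ subsequent configurations contributes at most two (one transition step plus one in-configuration crossing). Since the 2-agent walk is spanning, so is the resulting single-agent walk, yielding $h_1(G) \leq 2 h_2(G) + 1$.

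I do not anticipate a substantial obstacle beyond bookkeeping of the alternation: the in-configuration crossing in $\mathcal{C}_t$ flips which of $u$ or $v$ the agent currently mirrors, and one should check that this flip is consistent with using the transition edge of the correct agent at the next step. Apart from that, every edge used in the constructed walk is either an in-configuration edge (guaranteed to exist since $k = 2$) or corresponds to a per-agent move from the original 2-agent walk (guaranteed by adjacency of consecutive configurations), so no further graph-theoretic argument is needed.
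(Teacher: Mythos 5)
Your proposal is correct and is essentially the same argument as the paper's: the paper also converts the optimal 2-agent walk $\{(u_1,u_2),(u_3,u_4),\dots\}$ into the single-agent zig-zag $u_1, u_2, u_4, u_3, u_5, u_6, u_8, u_7, \dots$, alternating in-configuration crossings with per-agent transition moves, for a total length of at most $2h_2(G)+1$. No substantive difference.
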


\begin{proof}
    Let $W_2=\{(u_1, u_2), (u_3, u_4), \cdots, (u_{2p+1}, u_{2p+2})\}$ be an optimal spanning walk of configuration with two agents. Then $p = h_2(G)$. Also by the definition of adjacent configurations,$(u_{2i+1}, u_{2i+2} \in E, \forall i:0\leq i \leq p$; $u_{2i+1} = u_{2i+3}$ or $(u_{2i+1}, u_{2i+3}) \in E, \forall i:0\leq i \leq p-1$; $u_{2i+2} = u_{2i+4}$ or $(u_{2i+2}, u_{2i+4}) \in E, \forall i:0\leq i \leq p-1$. Hence, we can convert $W_2$ to a spanning walk $W_1$ of $G$, where $W_1=\{u_1, u_2, u_4, u_3, u_5, u_6, u_8, u_7, \cdots\}$. The length of $W_1$ is at most $2p+1$. Therefore, $h_1(G) \leq 2p+1 \leq 2h_2(G)+1.$
\qed \end{proof}

\noindent \textbf{\textit{The Approximation Algorithm for 2-HWP}:} Let $W_1 = \{v_1, v_2, \cdots, v_{l+1}\}$ be a spanning walk of configurations with one agent using some known $\beta$-approximation algorithm for the open Hamiltonian walk problem.
Then $l \leq \beta \cdot h_1(G)$. The walk $W_1$ can be converted to a spanning walk of configurations $W_2$ with two agents, where $W_2 = \{(v_1, v_2), (v_2, v_3), \cdots, (v_l, v_{l+1})\}$. The length of the walk $W_2$ is $(l-1) \leq \beta \cdot h_1(G)-1 \leq 2\beta \cdot h_2(G) + (\beta-1)$, by Lemma \ref{lem:rel-h1-h2}.
The best known approximation factor for the Hamiltonian walk problem is $3/2$ by Hoogeveen \cite{HOOGEVEEN1991291}. Hence, we have a solution of size at most $(3h_2(G) +1)$. 

Now, we want to find out if there is an approximation algorithm for 2-HWP that has an approximation factor less than 3.
Theorem \ref{thm:<3-2-HWP} conclusively confirms the possibility. 
The proposed algorithm uses two subroutines. One is \texttt{modified Christofide's heuristic} \cite{HOOGEVEEN1991291} to find the minimum-weight Hamiltonian path in a complete weighted graph. The other is an approximation algorithm for the \textit{maximum weighted $r$-set packing problem} ($max_rSPP$) \cite{thiery2023improvedweightedpacking}, where we are given a collection of weighted sets, each of which contains at most $r$ elements from a fixed universe of finite elements. The goal is to find a sub-collection of pairwise disjoint sets with the maximum total weight.
    
\noindent \textbf{\textit{Overview of the Improved Algorithm for 2-HWP:}} First, we create a weighted set-packing instance from the given unweighted graph $G$ (Section \ref{subsec:set-packing-instance}). Then, we apply an approximation algorithm of the set-packing problem to get a solution, and from that solution, we construct a tree $Tr(G)$ that spans all the vertices of $V$ (Section \ref{sec:constructing-tree}). Next, we find an open Hamiltonian walk in the tree $Tr(G)$, which gives us the required solution (Section \ref{subsec:find-HW}).
The proofs of all lemmas are excluded due to page limits.

\subsection{Creating a set-packing instance from $G$}
\label{subsec:set-packing-instance}
From a given instance of $2$-HWP, we first create an instance of $max_6SPP$, denoted as $\mathcal{I}_{SP}(G)$. 
The motivation behind the construction is to select all possible cases from which we can get a 2-transition edge. These transitions may arise from a cycle of length four $(C_4)$ or a grid of size $2 \times 3$ or any combination of the two. Finding such a disjoint collection of sets is preferable, as it reduces repeated visits to certain vertices. This can be achieved by applying the set-packing algorithm. However, it is possible that two $C_4$'s intersect at a single vertex (as shown in Fig. \ref{fig:packing-type}). In such cases, all vertices of two $C_4$'s can be traversed using 4 configurations. So, selecting only disjoint sets does not fully satisfy our objective.
We define several types of packing sets to adapt such subgraphs in our solution. 
    
We construct four types of weighted sets: two corresponding to each 4-length cycle $C = \{v_1, v_2, v_3, v_4\}$, and two corresponding to each $2 \times 3$ grid $D = \{v_1, v_2, v_3, v_4, v_5, v_6\}$ in $G$.
Note that $G$ may not contain any $C_4$. In that case, we have a simpler algorithm with a better approximation factor, which is discussed at the end of the section (Theorem \ref{thm:1-transit-edge_tree}). However, our main focus is the arbitrary case where the graph may be dense and contains many $C_4$'s.
\begin{itemize}
       \item (Type-I) A set $C_1^1 = C$ with weight four.
       \item (Type-II) Four sets, each of the form $C_2^i = C \setminus \{{v_i}\} \cup \{{u_C^i}\}$ with weight three for $1 \leq i \leq 4$, where $u_C^i$ is a dummy variable corresponding to $v_i$, not belonging to $V$ and not previously used in the construction process.
       \item (Type-III) A set $D = \{v_1, v_2, v_3, v_4, v_5, v_6\}$ with weight six.
       \item (Type-IV) For each set $D$ of Type-III, we create another six sets, each of the form $D \setminus \{{v_i}\} \cup \{{w_D^i}\}$ with weight five for $1 \leq i \leq 6$, where $w_D^i$ is a dummy variable corresponding to $v_i$, not belonging to $V$ and not previously used.
\end{itemize}
We call all these sets the \textit{packing set}. The weights are assigned according to how many vertices a packing set contains from the vertex set $V$. Note that $\{v_1, v_2, v_3, v_4\}$ being a cycle of length four, between the configurations $(v_1, v_2)$ and $(v_4, v_3)$, there is a transition edge. Similarly, if $\{v_1, v_2, v_3, v_4, v_5, v_6\}$ is a grid of size $2 \times 3$, there must be two transition edges between $(v_1, v_2)$ and $(v_4, v_3)$, and between the configurations $(v_4, v_3)$ and $(v_5, v_6)$.
\begin{figure}[h]
    \centering      \includegraphics[width=0.9\linewidth]{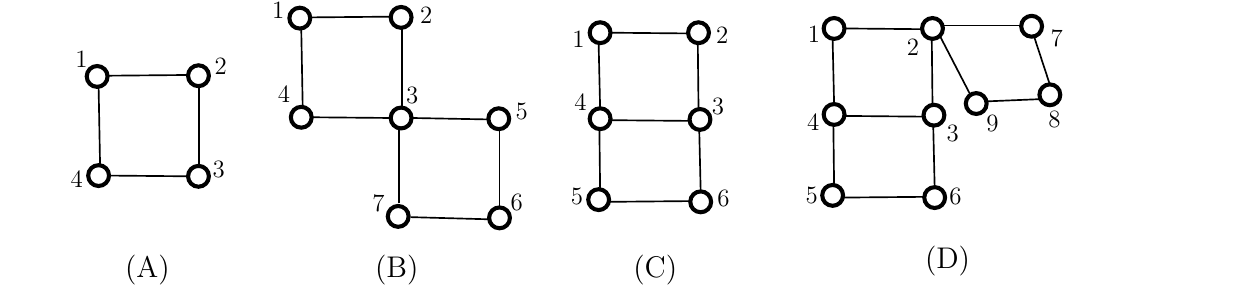}
    \caption{(A) A Type-I set $\{1, 2, 3, 4\}$, (B) A Type-I set $\{1, 2, 3, 4\}$ and a Type-II set $\{u, 5, 6, 7\}$, where $u$ is a dummy variable of the vertex $3$, (C) A Type-III set $\{1, 2, 3, 4, 5, 6\}$, (D) A Type-I set $\{2, 7, 8, 9\}$ and a Type-IV set $\{1, v, 3, 4, 5, 6\}$, where $v$ is the dummy variable of the vertex $2$. These packing sets are used to accommodate the above-mentioned subgraphs in our solution.}
    \label{fig:packing-type}
\end{figure} 
       
\noindent $G$ can have at most $\binom{n}{4}$ cycles of length four. For each such cycle, we create one Type-I packing set and four Type-II packing sets. Furthermore, $G$ has at most $\binom{n}{6}$ grids of size $2 \times 3$. For each such grid, we make one Type-III packing set and six Type-IV packing sets. Thus, for a given graph $G$, we create at most $5 \cdot \binom{n}{4} + 7 \cdot \binom{n}{6} = O(n^6)$ sets, where each set is of size at most six and the universal set of the procedure is $V \cup(\cup_{C \in G}\cup_{i=1}^4 u_C^i) \cup (\cup_{D \in G}\cup_{j=1}^6 w_D^j)$, where $C$ is a 4 length cycle and $D$ is a $2 \times 3$ grid in $G$. All $u_C^i$ and $w_D^j$ ($\forall i, j)$ are distinct.
    
$max_kSPP$ is NP-hard and since the introduction of the set packing problem, a significant research has been conducted in the past few decades for the weighted, unweighted, and online cases  \cite{arkin1998localPacking}, \cite{chandra2001greedyWeightedpacking},  \cite{emek2012onlinePacking}, \cite{sviridenko2013largePacking}, \cite{thiery2023improvedweightedpacking}. The best known polynomial-time approximation algorithm of the weighted $k$-set packing problem is given by Thiery and Ward \cite{thiery2023improvedweightedpacking}, achieving a factor slightly less than $\frac{2}{k+1}$, which asymptotically approaches $\frac{4}{2k+1}$ as $k$ grows. We define $max_6(\mathcal{I})$ to represent the problem $max_6SPP$ for the instance $\mathcal{I}$. We denote the optimal value of $\maxG$ by $w(OPT_{SP}(G))$. Then the following lemma holds:
\begin{lemma}
    \label{lem:lowerB_2-approx}
    $h_2(G) \geq (n - \frac{w(OPT_{SP}(G))}{2} - 1)$.
\end{lemma}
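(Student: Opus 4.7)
The plan is to combine a vertex-counting lower bound on $h_2(G)$ with the extraction of a feasible packing from an optimal walk, then relate the two. Fix an optimal spanning walk of configurations $W^{*} = (\mathcal{C}_0, \mathcal{C}_1, \ldots, \mathcal{C}_l)$ with $l = h_2(G)$, and classify each transition as either a $1$-transition or a $2$-transition. Letting $a$ and $b$ denote the numbers of each type, with $a + b = l$, the initial configuration $\mathcal{C}_0$ supplies $2$ vertices while each subsequent transition introduces at most its type in new vertices. Since every vertex of $G$ must appear in some configuration, $n \leq 2 + a + 2b = l + 2 + b$, which rearranges to $l \geq n - 2 - b$.

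Next I would extract from $W^{*}$ a feasible packing $\mathcal{P}$ in $\mathcal{I}_{SP}(G)$ of weight at least $2b + 2$. The key observation is that every $2$-transition $\mathcal{C}_{i-1} \to \mathcal{C}_i$ identifies a $C_4$ in $G$ on the four vertices of $\mathcal{C}_{i-1} \cup \mathcal{C}_i$ (the four required edges are present because both configurations are edges of $G$ and both agents move along edges), and three pairwise-disjoint consecutive configurations linked by two $2$-transitions identify a $2 \times 3$ grid on the six vertices involved. Hence each maximal block of $r$ consecutive $2$-transitions whose $2(r+1)$ vertices are distinct yields a strip-like subgraph of $G$ that decomposes disjointly into Type-I and/or Type-III sets of total weight $2(r+1)$. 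Summing over the $k$ maximal blocks gives $w(\mathcal{P}) \geq \sum_i (2r_i + 2) = 2b + 2k \geq 2b + 2$.

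The main obstacle is handling the two sources of collision that break this idealised picture: vertices repeated within a single block, and vertices shared across distinct blocks. This is precisely why Type-II and Type-IV sets were introduced: given any conflicting vertex that has already been committed to an earlier packing set, I can replace it with a fresh dummy variable $u_C^i$ or $w_D^j$, retaining a valid packing set but losing exactly one unit of weight per substitution. A careful bookkeeping must then verify that each such substitution is compensated by a corresponding loss in the counting step, because a repeated vertex in the walk fails to contribute a new vertex slot and therefore relaxes the bound $n \leq 2 + a + 2b$ in my favour. The aim of that case analysis is to show that the composite inequality $w(\mathcal{P}) \geq 2b + 2$ is preserved in the presence of overlaps; I expect this overlap accounting to be the most delicate part of the proof.

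Combining the two bounds yields the lemma: since $w(OPT_{SP}(G)) \geq w(\mathcal{P}) \geq 2b + 2$, we obtain $b \leq w(OPT_{SP}(G))/2 - 1$, and substituting into $l \geq n - 2 - b$ gives
\[
  h_2(G) \;=\; l \;\geq\; n - 2 - \left(\tfrac{w(OPT_{SP}(G))}{2} - 1\right) \;=\; n - \tfrac{w(OPT_{SP}(G))}{2} - 1,
\]
which is the desired inequality.
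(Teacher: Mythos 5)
Your overall strategy---counting new vertices per transition type to get $l \ge n-2-b$, and separately extracting a feasible packing from the optimal walk to bound $b$---is in the same spirit as the paper's proof, which likewise builds an instance $\mathcal{I}_{SP}(\mathcal{W}^*)$ out of the $2$-transitions of an optimal walk and trades packing weight against the number of configurations. However, your pivotal intermediate claim $w(\mathcal{P}) \ge 2b+2$ is false as stated, and the place where it fails is exactly where the paper's proof does its real work. Concretely, nothing prevents an optimal walk from performing two $2$-transitions across the \emph{same} four vertices (the walk may revisit a $C_4$): then $b=2$ but the packing extractable from those transitions has weight $4 < 2b+2 = 6$. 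You anticipate this ("each such substitution is compensated by a corresponding loss in the counting step"), and the intuition is sound---a repeated vertex also weakens $n \le 2+a+2b$---but it means the two bounds cannot be established independently and then chained; you need a single joint inequality, and you explicitly leave that bookkeeping undone. That bookkeeping is the proof. The paper closes it by a different route: it fixes an optimal packing of $\mathcal{I}_{SP}(\mathcal{W}^*)$, splits the configurations into those charged to packed sets ($\mathcal{W}_1^*$, at most two units of weight per configuration) and the rest ($\mathcal{W}_2^*$), and uses exchange arguments (e.g.\ upgrading a Type-I set to a Type-III set, or splitting a Type-III set into two Type-I sets) to show each remaining configuration covers on average at most one unpacked vertex. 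Your sketch would need analogues of those exchanges; note also that a Type-II or Type-IV set tolerates only \emph{one} dummy vertex, so a $C_4$ in one of your strips that collides with previously committed sets in two or more vertices has no packing set available at all, and the "lose one unit per substitution" repair breaks down.

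There is also an edge case that breaks your final algebra: when $b=0$ there are no blocks, so your bound degenerates to $w(\mathcal{P}) \ge 0$, and the step $b \le w(OPT_{SP}(G))/2 - 1$ would require $w(OPT_{SP}(G)) \ge 2$, which fails whenever $G$ is $C_4$-free and the packing instance is empty; your chain then yields only $h_2(G) \ge n-2$, one short of the claimed $n-1$. This discrepancy is worth flagging rather than hiding: for a path $P_n$ one has $h_2(P_n)=n-2$ while $w(OPT_{SP}(P_n))=0$, so the additive constant in the statement is itself delicate (the paper's proof of Theorem~\ref{thm:<3-2-HWP} likewise falls back to the bound $n-2$ in the $C_4$-free case). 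At a minimum your write-up should isolate $b=0$ and state precisely what is proved there.
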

\begin{proof}
    Let $\mathcal{W}^*$ be the optimal open spanning walk for the $2$-HWP of graph $G$. Every transition edge in this walk must be a $1$-transition edge or a $2$-transition edge. Based on $\mathcal{W}^*$, we now construct an instance $\mathcal{I}_{SP}(\mathcal{W}^*)$ of $max_6SPP$ as follows: for each $2$-transition edge between $(v_1, v_2)$ and $(v_4, v_3)$, we always have a cycle $C = \{v_1, v_2, v_3, v_4\}$. For each such cycle $C$, we create a Type-I set $\{v_1, v_2, v_3, v_4\}$ with weight four, and four additional Type-II sets $\{u_C^1, v_2, v_3, v_4\},$ $ \{v_1, u_C^2, v_3, v_4\}, \{v_1, v_2, u_C^3, v_4\}, \{v_1, v_2, v_3, u_C^4\}$, each with weight three. For each two consecutive $2$-transition edges between $(v_1, v_2)$, $(v_4, v_3)$ and $(v_5, v_6)$, we create a Type-III set $D=\{v_1, v_2, v_3, v_4, v_5, v_6\}$ with weight six, along with six Type-IV sets of the form $D \setminus \{v_i\} \cup \{w_D^i\}$ ($1 \leq i \leq 6$), each with weight five. It is easy to observe that, $\mathcal{I}_{SP}(\mathcal{W}^*) \subseteq \mathcal{I}_{SP}(G)$. So, the optimal solution of $\max_6(\mathcal{I}_{SP}(\mathcal{W}^*))$ must be a feasible solution of $\maxG$. Therefore,
    \begin{equation}
    \centering
    \label{eq:rel_btw_OPT_SP_G_W*}
      w(OPT_{SP}(G)) \geq w(OPT_{SP}(\mathcal{W}^*))
    \end{equation}
    where $w(OPT_{SP}(\mathcal{W}^*))$ is the optimal solution of $\max_6(\mathcal{I}_{SP}(\mathcal{W}^*))$.
        
    The above construction shows that each of Type-I set (weight four) and Type-II set (weight three) requires two configurations, while each of Type-III (weight six) and Type-IV (weight five) requires three. Thus, the procedure uses at most a weight of two and at least $3/2$ per configuration. Therefore, the following holds:
    \begin{equation}
   \label{eq:bound_SPP}
       \frac{w(OPT_{SP}(\cW^*))}{2} \leq |\cW_1^*| \leq \frac{2 \cdot w(OPT_{SP}(\cW^*))}{3}
    \end{equation}
    where $\mathcal{W}_1^*$ denotes the set of all configurations $\cC \in \mathcal{W}^*$ such that there exists a transition edge in $\cW^*$ adjacent to $\cC$ and the corresponding packing set to that transition edge is in the optimal solution of $\max_6(\mathcal{I}_{SP}(\mathcal{W}^*))$. 
    Let $V' \subseteq V$ be the set of vertices that are in $\mathcal{W}_1^*$. Then the size of $V'$ must be $w(OPT_{SP}(\mathcal{W}^*))$. In other words, $|V \setminus V'| = n - w(OPT_{SP}(\cW^*))$. Let $\mathcal{W}_2^*$ be the set of all remaining configurations of $\mathcal{W}^*$ that are not in $\mathcal{W}_1^*$. We will prove that, on average, each configuration in $\mathcal{W}_2^*$ (except the initial one) covers at most one element of $V \setminus V'$.
        
    Let $\cC_t =(v_1^t, v_2^t)$ be an arbitrary configuration of $\cW_2^*$. We distinguish between two cases based on whether the previous configuration $\cC_{t-1}$, belongs to $\cW_1^*$ or not. (Case-1) $\cC_{t-1} \in \cW_1^*$. If the transition edge between $\cC_{t-1}$ and $\cC_{t}$ is a $1$-transition edge, we have done. Let the edge between $\cC_{t-1}$ and $\cC_{t}$ is a 2-transition edge and both the elements of $\cC_t$ are in $V \setminus V'$. In this case, the optimum value of $max_6SPP$, for the instance $\mathcal{I}_{SP}(\mathcal{W}^*)$, can be uplifted by adding $\cC_t$ as follows: If $\cC_{t-1}$ is a part of Type-I packing set, the set $C_1 = \{v^1_{t-2}, v^2_{t-2}, v^1_{t-1}, v^2_{t-1}\}$ must be a Type-I packing set of weight four and lies in the optimum solution. We can increased the optimal value by deleting $C_1$ and adding a Type-III packing set $\{v^1_{t-2}, v^2_{t-2}, v^1_{t-1}, v^2_{t-1}, v_t^1, v^2_t\}$ which is of weight six. This is a contradiction. Similarly, if $\cC_{t-1}$ is a part of the Type-II packing set, we can change the optimal value by removing the associated packing set of weight three and then inserting a Type-IV packing set, which is of weight five. If $\cC_{t-1}$ is a part of the Type-III set, we can change the optimal value by deleting the set of weight six and inserting two sets of Type-I with each of the weights as four. If $\cC_{t-1}$ is a part of the Type-IV set, we can change the optimal value by deleting the corresponding packing set of weight five and inserting a set of Type-I with weight four and a set of Type-III of weight three. All of these updates lead us to a contradiction with the notion of the optimality of $\cW^*$. Hence, if the transition edge between $C_t$ and $C_{t-1}$ is a $2$-transition edge, then at least one of the elements of $\cC_t$ must be in $V'$. (Case-2) $\cC_{t-1} \notin \cW_1^*$. For this case if the transition edge between $C_t$ and $C_{t-1}$ is a $1$-transition edge, then we have done. If the transition edge is $2$-transition edge and if we assume that there are at least three elements of $\cC_t \cup \cC_{t-1}$ which are not in $V'$, say $v_t^1, v_{t-1}^1, v_{t-1}^2$, then we can again increase the optimum value by adding the set $\{v_t^1, v_{t-1}^1, v_{t-1}^2, u\}$ with the weight three for some $u \notin V$, which is a contradiction. So, if the transition edge between $\cC_t$ and $\cC_{t-1}$ is $2$-transition edge, then at most two elements among the four elements of $\{\cC_t \cup \cC_{t-1}\}$  are in $V \setminus V'$. A similar argument is also valid if the initial configuration $\cC_0$ is in $\cW_2^*$. To prove this, we can consider the two cases based on $\cC_1$ is in $\cW_1^*$ or not.

    Therefore a configuration in $\cW_2^*$ covers, on average, at most, one vertex of $V \setminus V'$. Thus $|\cW_2^*| \geq |V \setminus V'| \geq n - w(OPT_{SP}(\cW^*))\geq n - w(OPT_{SP}(G))$.
    Hence by using Eq. \ref{eq:rel_btw_OPT_SP_G_W*} and  \ref{eq:bound_SPP} we have,
    \begin{equation}
    \begin{split}
       h_2(G) & = |\cW_1^*| + |\cW_2^*| -1 \geq \frac{w(OPT_{SP}(\cW^*))}{2} + n - w(OPT_{SP}(\cW^*)) - 1, \\
            & \geq n - \frac{w(OPT_{SP}(G))}{2} - 1.
    \end{split}
    \end{equation}
        \qed 
    \end{proof}

Our first step in designing an approximation algorithm is to create a tree $Tr(G)$ based on the approximate solution of $\maxG$. There are two types of vertices in the tree $Tr(G)$: some are the vertices of $G$, and the remaining ones are the edges of $G$, which we refer to as \textit{contracted vertices}. We discuss the construction process in the following section. 
\subsection{Construction of the tree Tr(G) that spans V}
\label{sec:constructing-tree}
Let $Sol$ be one of the approximation solutions of $\maxG$. Let $w(Sol_{SP}(G)$ $)$ be the value (weight) of the solution returned by  $Sol$. Now we do the following:
\begin{itemize}[left=0pt]
    \item For each Type-I packing set $\{v_1, v_2, v_3, v_4\} \in Sol$ of weight four, we add two contracted vertices $(v_1, v_2)$, $(v_4, v_3)$ to $\tL$ and connect them by an edge. Adding this edge spans 4 distinct vertices of $G$ and creates a new component.
    \item For each Type-II packing set $(C=\{v_1, v_2, v_3, v_4\}) \setminus \{v_i\} \cup \{u_C^i\} \in Sol$ of weight three for some $1 \leq i \leq 4$, we add two contracted vertices $(v_1, v_2)$ and $(v_4, v_3)$ to $\tL$, and connect them with an edge, which span three distinct vertices of $G$, since the vertex $v_i$ is already in another packing set $S \in Sol$. Let $v_j \in C$ and $v_l \in S$ be the associated vertices contracted with $v_i$. We then add an edge between $(v_i, v_j)$ and $(v_i, v_l)$, ensuring that all the vertices of $C$ and $S$ are part of the same connected component of $Tr(G)$.
    \item For each Type-III packing set $\{v_1, v_2, v_3, v_4, v_5, v_6\} \in Sol$ of weight six, we build three contracted vertices $(v_1, v_2)$, $(v_4, v_3), (v_5, v_6)$ and add to $\tL$. We join the edges between the first and last two, which span six distinct vertices of $V$ and create one new component.
    \item For each Type-IV packing set $D = (\{v_1, v_2, v_3, v_4, v_5, v_6\}) \setminus \{v_i\} \cup \{w_D^i\} \in Sol$ of weight five for some $1 \leq i \leq 6$, we select three contracted vertices $(v_1, v_2)$, $(v_4, v_3), (v_5, v_6)$ and add them to $\tL$ with the edges between the first and last two, which span five distinct vertices (except $v_i$) of $V$, where $v_i$ is present in another packing set $S \in Sol$. Let $v_j \in C$ and $v_l \in S$ be the associated vertices contracted with $v_i$. We also placed an edge between $(v_i, v_j)$ and $(v_i, v_l)$ that makes all the vertices of $C$ and $S$ in one component of $Tr(G)$.
\end{itemize}

The number of vertices in $V$ included by the above procedure is $w(Sol_{SP}(G))$. Let, $Sol = Sol(I) \cup Sol(II) \cup Sol(III) \cup Sol(IV)$, where $Sol(i)$ consists of all the packing sets of Type-i, $i \in \{I, II, III, IV\}$. So, the total number of edges added after the procedure is $\frac{w(Sol(I)_{{SP}}(G))}{4} + \frac{2 \cdot w(Sol(II)_{{SP}}(G))}{3} + \frac{w(Sol(III)_{{SP}}(G))}{3} + \frac{3 \cdot w(Sol(IV)_{{SP}}(G))}{5}$.
The number of trees in the forest $\tL$ (disjoint union of trees) after the procedure is $\frac{w(Sol(I)_{{SP}}(G))}{4} + \frac{w(Sol(III)_{{SP}}(G))}{6}$.
The number of vertices in $V$ yet to be spanned is $(n - w(Sol_{SP}(G)))$. Since the graph $G$ is connected, we can always find a contracted vertex $(v, v')$ in $Tr(G)$ such that $v \in V$ is not spanned by the above procedure. We add the vertex $v'$ to the forest $Tr(G)$ and join the edge between the vertex $v$ and the contracted vertex $(v, v')$. After this, we again select a vertex from $V$ that is not present in the forest $Tr(G)$ and connect it to one of the vertices of $Tr(G)$. We continue the procedure until it spans all the vertices of $V$.
The procedure includes $(n - w(Sol_{SP}(G)))$ additional edges to span all remaining $(n - w(Sol_{SP}(G)))$ vertices of $V$. Thereafter, we make the forest $\tL$ connected or a single tree.
To do so, we find two trees $Tr_1$ and $Tr_2$ of the forest $\tL$ such that there is an edge $e=(v_1, v_2) \in E$ such that $v_1$ is present in the tree $Tr_1$ and $v_2$ is present in the tree $Tr_2$. We include a contracted vertex $(v_1, v_2)$ in the forest $Tr(G)$. Then we connect an edge between $(v_1, v_2)$ and the vertex of $Tr_1$ which contains $v_1$, and another edge between $(v_1, v_2)$ and the vertex of $Tr_2$ which contains $v_2$.
Since there is a $n_c = \frac{w(Sol(I)_{{SP}}(G))}{4} + \frac{w(Sol(III)_{{SP}}(G))}{6}$ number of trees after the first phase of the procedure, the last phase of the procedure needs exactly $(2n_c-2)$ many edges. Hence, the total number of edges added after all the procedures is,
\begin{align}
\label{eq:bound_len_Tr(G)}
    len(\tL) &= \frac{w(Sol(I)_{SP}(G))}{4} + \frac{2\cdot w(Sol(II)_{SP}(G))}{3} + \frac{w(Sol(III)_{SP}(G))}{3} \notag \\
    &\quad + \frac{3 \cdot w(Sol(IV)_{SP}(G))}{5} + (n - w(Sol_{SP}(G))) \notag \\
    &\quad + 2\cdot (\frac{w(Sol(I)_{SP}(G))}{4} + \frac{w(Sol(III)_{SP}(G))}{6} - 1
\end{align}

\subsection{Finding an open Hamiltonian walk on the tree $\tL$}
\label{subsec:find-HW}
The above procedure implies that an open Hamiltonian walk of $\tL$ with one agent corresponds to an open walk of configurations with two agents that spans all the vertices in $V$. Theorem \ref{thm:OptimalBound_1HWP_Tree} guarantees that for a tree with diameter $d$ and $(n-1)$ edges, the length of the optimal open Hamiltonian walk is $2(n-1) - d$. 
To find a better approximation result, here we give an alternative way to find the open Hamiltonian walk of $\tL$ by using \texttt{modified Christofide's heuristic} proposed by Hoogeveen \cite{HOOGEVEEN1991291}. The heuristic applies to the complete weighted graph with weight function satisfying the triangular inequality. Therefore, we first convert our constructed tree $\tL$ into a complete weighted graph $Tr^*(G)$ as follows: each existing edge of $\tL$ is assigned a weight of one, while each missing edge between two vertices $tr_1$ and $tr_2$ is given a weight equal to the length of the shortest path between them in the line graph of $G$ or in G, depending on whether the vertex is contracted or not. If $tr_1$ or $tr_2$ is a non-contracted vertex (i.e., in $V$), we compute the length between them in $G$. However, if both $tr_1$ and $tr_2$ are contracted vertices, we calculate the shortest length in the line graph of $G$. By assigning weights this way, the triangular inequality is preserved.
Next, similar to the \texttt{modified Christofide's heuristic}, we first find the minimum spanning tree ($MST_{Tr^*}$) of the newly created graph $Tr^*(G)$, which gives the tree $Tr(G)$. Then we find the minimum matching ($MM_{Tr^*}$) between all the odd-degree vertices, except for two. The detail and analysis of this heuristic can be found in \cite{HOOGEVEEN1991291}.
Let $cost(MST_{Tr^*})$ and $cost(MM_{Tr^*})$ be the cost of the $MST_{Tr^*}$ and $MM_{Tr^*}$, respectively.
We keep two odd-degree vertices unmatched as we look for the path, not the cycle. By adding the matching edges, all the vertices in $Tr(G)$ become even-degree vertices except two, which is sufficient to get an Eulerian path (a path traversing each edge exactly once).
The length of this Eulerian path is $cost(MST_{Tr^*})+ cost(MM_{Tr^*}) = len(Tr(G)) + cost(MM_{Tr^*})$.
This path can be easily modified to obtain an open walk of a configuration of $G$ with two agents, without changing its length.
Hence, if the complete procedure as discussed above is referred to as $ALG_2$, then the length of the transition walk returned by $ALG_2$ is $t_{ALG_2} = len(Tr(G)) + cost(MM_{Tr^*})$.\\
\begin{figure}[H]
    \centering
    \includegraphics[width=0.86\linewidth]{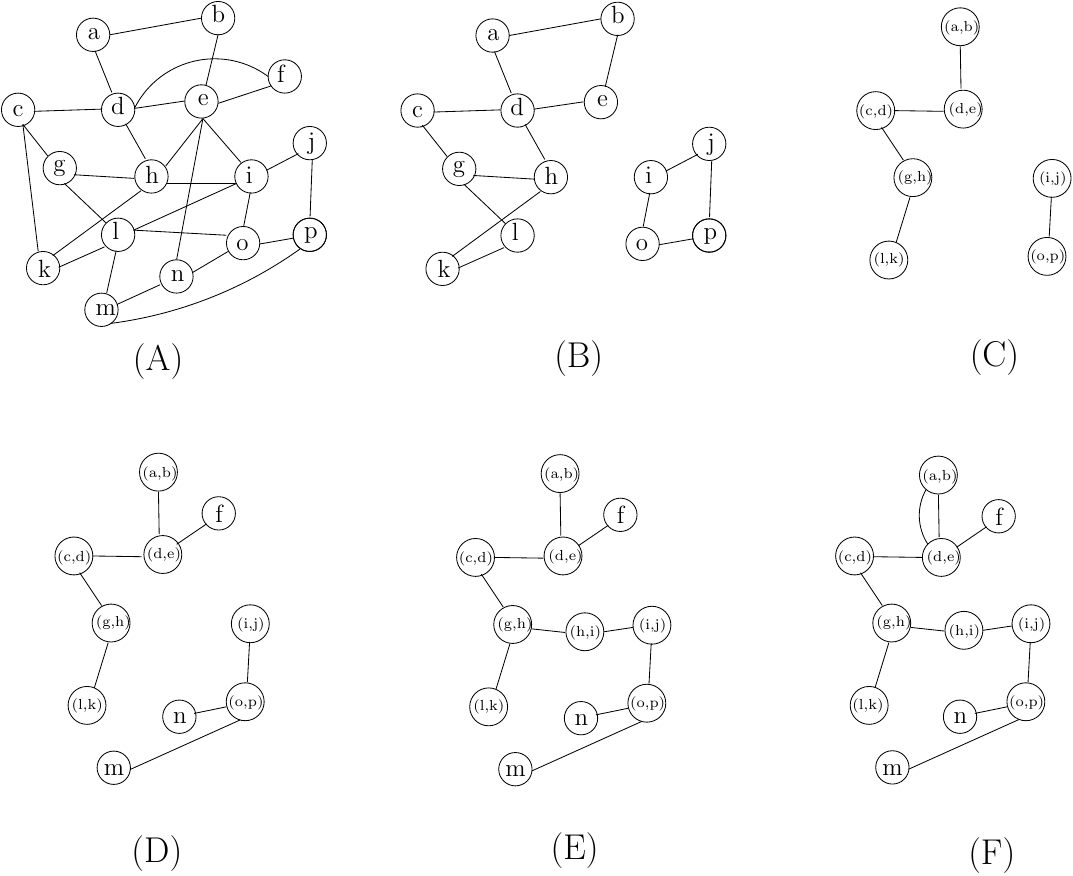}
    \caption{Illustrating a step-by-step execution of the algorithm} 
    \label{fig:algo-2-example}
\end{figure}
A step-by-step execution of the algorithm is shown in Fig. \ref{fig:algo-2-example}. 
Fig. \ref{fig:algo-2-example}(A) represents base graph $G$. Fig. \ref{fig:algo-2-example}(B) depicts all the packing sets $\{i,j,p,o\},$ $ \{a,b,e,u\},$ $ \{c,d,h,g,l,k\}$ selected by $Sol$, where $u$ is the dummy vertex for $d$.
Fig. \ref{fig:algo-2-example}(C) shows the creation of the forest $Tr(G)$. $\{i,j,p,o\}$ being a Type-I packing set, we incorporate two vertices $(i,j)$, $(o,p)$ within $Tr(G)$ and insert an edge between them. $\{c,d,h,g,k,l\}$ being a Type-III packing set, we add three vertices $(c,d)$, $(g, h), (l, k)$, and subsequently insert edges to connect them. $\{a,b,e,u\}$ being a Type-II packing set and $u$ being the dummy variable of $d$, we draw an edge between $(a,b)$ and $(d, e)$ and between $(c, d)$ and $(d, e)$.
Fig. \ref{fig:algo-2-example}(D) describes the process of connecting all the leftover vertices $\{f, m, n\}$ of $G$ to $Tr(G)$.
Fig. \ref{fig:algo-2-example}(E) portrays of converting the forest $Tr(G)$ into tree, where we select an edge $(h, i)$ that links them. Then, we insert the vertex $(h, i)$ and connect it to both $(g,h)$ and $(i,j)$.
In Fig. \ref{fig:algo-2-example}(F) we convert two odd-degree contracted vertices vertices $(a, b)$ and $(d,e)$, obtained from the same Type-II packing set, into even-degree by adding an edge between them.
A bound on $cost(MM_{Tr^*})$ is provided in the lemma below for the final analysis of the procedure $ALG_2$.

\begin{lemma}
    \label{lem:bound_on_minmatch}
    $cost(MM_{Tr^*}) \leq \frac{1}{2}(h_1(G)+ n\_odd_c)$, where $n\_odd_c$ is the number of odd-degree contracted vertices in the tree $Tr(G)$.
\end{lemma}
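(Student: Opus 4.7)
The plan is to invoke the classical Christofides-style halving argument in $Tr^*(G)$. Since the edge weights in $Tr^*(G)$ are shortest-path distances (in $G$ or in the line graph of $G$, depending on the endpoints), they obey the triangle inequality, so the cost of a minimum-weight perfect matching on any even set $O$ of vertices is at most half the cost of any closed walk in $Tr^*(G)$ visiting every vertex of $O$. Letting $O$ be the set of odd-degree vertices of $Tr(G)$ that are matched by $MM_{Tr^*}$, it suffices to exhibit a closed walk in $Tr^*(G)$ meeting every vertex of $O$ of total weight at most $h_1(G)+n\_odd_c$.

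To construct such a walk, I start from an optimal open Hamiltonian walk $W=v_0,v_1,\ldots,v_{h_1(G)}$ of $G$. Each consecutive pair $(v_i,v_{i+1})$ is an edge of $G$ and therefore has weight $1$ in $Tr^*(G)$, so $W$, viewed as a walk in $Tr^*(G)$, has total weight exactly $h_1(G)$; moreover, since $W$ visits every vertex of $V$, every odd-degree non-contracted vertex of $Tr(G)$ (i.e.\ every vertex of $O\cap V$) is already covered. I then splice in each odd-degree contracted vertex $c=(u,v)\in O\setminus V$: since $d_{Tr^*}(u,c)=d_{Tr^*}(v,c)=1$, whenever $W$ traverses the edge $uv$ I replace the step $u\to v$ by the detour $u\to c\to v$ at overhead exactly $+1$, and whenever $W$ does not, I piggy-back the insertion onto a neighbouring step of $W$ at $u$ using the triangle inequality in $Tr^*(G)$, which by a careful amortisation also contributes only $+1$. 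After a final closing step whose cost is absorbed into the halving, the resulting closed walk has total weight at most $h_1(G)+n\_odd_c$, and halving gives the claim.

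The main obstacle will be establishing the per-insertion overhead of $1$ rather than the naive $2$ that a round trip $u\to c\to u$ would contribute. The key observation is that each contracted vertex $c=(u,v)$ represents a genuine edge $uv$ of $G$, so its insertion can be paired with a traversal of that same edge inside $W$: if the edge is already traversed, the replacement $u\to v\rightsquigarrow u\to c\to v$ costs $+1$ by direct computation; and for any contracted vertex whose edge is not traversed, a local shortcut of $W$ in $Tr^*(G)$ (cost-neutral by the triangle inequality) makes the same replacement possible without double-counting, so every odd-degree contracted vertex contributes at most $1$ to the augmented walk. Once this bookkeeping is in place, the halving argument yields $\mathrm{cost}(MM_{Tr^*})\le\tfrac{1}{2}(h_1(G)+n\_odd_c)$ as stated.
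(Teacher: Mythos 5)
Your proposal follows essentially the same route as the paper's proof: take an optimal single-agent spanning walk $W$ of length $h_1(G)$, insert each odd-degree contracted vertex $(u,v)$ into $W$ adjacent to an occurrence of one of its endpoints at an overhead of $+1$ per insertion (the paper splices $(v_1,v_2)$ between the first occurrence of $v_1$ and its successor in $W$), and then apply the standard decomposition of the resulting walk into two matchings on the odd-degree vertices, keeping the cheaper one. The only difference is presentational; both arguments leave the same per-insertion bookkeeping at the level of a sketch, so no substantive comparison is needed.
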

\begin{proof}
         Let $W$ be one of the optimal walks of configurations with one agent. Let $O = O_c \cup O_d$ be the set of odd-degree vertices in the tree $Tr(G)$, where $O_c$ is the set of all contracted odd-degree vertices and $O_d = O \setminus O_c$. We modify $W$ to make a walk for the tree $Tr(G)$ as follows: for each contracted vertex $(v_1, v_2) \in O_c$, we find the vertex $v_1$ when it first occurs in $W$, then we add the vertex $(v_1, v_2)$ in $W$ between the vertices $v_1$ and $u_1$, where $u_1$ is the next vertex of $v_1$ in the walk $W$. The length of the walk is $h_1(G)+n\_odd_c$. Then, similar to Theorem 1 \cite{arkin1998localPacking}, we can partition $W_1$ into two matchings, and then by selecting the minimum we can prove that $cost(MM_{Tr^*}) \leq \frac{1}{2}(h_1(G)+n\_odd_c)$.
     \qed \end{proof}

By using the above lemma, we have $t_{ALG_2} \leq len(Tr(G)) + \frac{1}{2}(h_1(G) + n_{odd_c})$.
To prove the following lemma, we modify the tree $Tr(G)$ as follows.
Let $tr_1$ and $tr_2$ be two contracted vertices corresponding to some Type-I or Type-II packing set. If both $tr_1$ and $tr_2$ are odd-degree vertices in the tree $Tr(G)$, then we add another edge between them. This modification increases the number of edges of $Tr(G)$ by one, while decreasing the number of odd-degree contracted vertices by two. Let $t'_{LAG_2}$ be the length of the transition walk returned by our algorithm $ALG_2$ after this modification. Then $t'_{ALG_2} \leq len(Tr(G)) - 1 + \frac{1}{2}(h_1(G) + n_{odd_c} -2) = len(Tr(G)) + \frac{1}{2}(h_1(G) + n_{odd_c})$. Thus, the upper bound on the length of the walk remains unaltered.
Therefore, we assume that, among the two contracted vertices constructed from a Type-I or Type-II packing set, at least one of them must be of even degree.
     
\begin{lemma}
    \label{lem:rel_nr_o_and_sol}
    $n\_odd_c \leq \frac{1}{4}w(Sol(I)_{SP}(G))+ \frac{1}{3} w(Sol(II)_{SP}(G)) + \frac{1}{3}w(Sol(III)_{SP}\\(G)) + \frac{3}{5} w(Sol(IV)_{SP}(G)))$, where $n_{odd_c}$ is the number of odd degree contracted vertices in the modified tree $Tr(G)$.
\end{lemma}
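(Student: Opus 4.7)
The plan is to partition the contracted vertices of the modified tree $Tr(G)$ according to the packing set in $Sol$ that produced them, and to bound the number of odd-degree contracted vertices contributed by each of the four classes $Sol(I), Sol(II), Sol(III), Sol(IV)$ separately. The main tools are the internal degree pattern imposed by each type and the modification rule stated just before the lemma, which neutralizes odd pairs inside Type-I and Type-II sets.

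For Type-I sets, the two contracted vertices are joined by a single internal edge, so each starts at degree $1$. After accounting for all external additions (cross-set sharing edges from Type-II or Type-IV packing sets, leftover-vertex attachments, and tree-connecting edges), the modification rule inserts an extra edge between the two contracted vertices of a Type-I set whenever both are of odd degree, so at least one of them ends up even; hence each Type-I set contributes at most one odd-degree contracted vertex, yielding the $\frac{1}{4} w(Sol(I)_{SP}(G))$ term. The Type-II case is analogous: the two contracted vertices have internal degrees $2$ (on the one containing the shared element $v_i$) and $1$, and the same modification argument gives at most one odd-degree contracted vertex per set, accounting for the $\frac{1}{3} w(Sol(II)_{SP}(G))$ term. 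For Type-III sets, the three contracted vertices form an internal path with degrees $1, 2, 1$, so only the two corner vertices are odd before external additions; the middle vertex can only become odd if an odd number of external edges reach it, and every such edge is a sharing edge from a Type-II or Type-IV set that simultaneously flips the parity of a contracted vertex inside the donor set. I would charge each parity flip at a Type-III middle vertex to the external Type-II or Type-IV set that supplied the edge and verify that this charging does not inflate the per-set bounds already established for Type-II and Type-IV; this gives at most two odd-degree contracted vertices per Type-III set, matching the $\frac{1}{3} w(Sol(III)_{SP}(G))$ term. The Type-IV case is handled by the trivial bound of three contracted vertices per set, producing the $\frac{3}{5} w(Sol(IV)_{SP}(G))$ term.

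The main obstacle is the Type-III charging step: one must verify that when several external sharings touch the same Type-III set, possibly concentrated on its two interior grid vertices $v_3$ and $v_4$, the induced parity flips inside the donor Type-II or Type-IV sets can be absorbed without pushing any of them above their per-set bounds of one or three, respectively. I would address this with a case analysis on which of the six grid vertices of a Type-III set receive external edges, combined with the global invariant that each cross-set sharing edge flips exactly two contracted-vertex parities (one at each endpoint), so that every excess odd vertex created at a Type-III middle is compensated by an even vertex created inside the donor set. Summing the four class contributions then yields the claimed upper bound on $n\_odd_c$.
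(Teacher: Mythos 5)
The paper does not actually supply a proof of this lemma (it is one of the proofs ``excluded due to page limits''), so there is nothing to compare your argument against; it has to stand on its own. Your per-class accounting is the right skeleton, and three of the four classes are fine: the modification rule stated just before the lemma immediately gives at most one odd-degree contracted vertex per Type-I and per Type-II set (i.e.\ $\frac{1}{4}w(Sol(I)_{SP}(G))$ and $\frac{1}{3}w(Sol(II)_{SP}(G))$), and the Type-IV bound of three per set is vacuous since a Type-IV set only produces three contracted vertices. Your internal degree patterns ($1$--$1$ for Type-I, $2$--$1$ for Type-II, $1$--$2$--$1$ for Type-III) also match the construction.

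The gap is exactly where you flagged it, and your proposed charging does not close it. You assert that every external edge reaching the middle contracted vertex $(v_4,v_3)$ of a Type-III set ``is a sharing edge from a Type-II or Type-IV set,'' but the construction attaches two other kinds of external edges: (a) leftover-vertex attachment edges, whose other endpoint is an ordinary vertex of $V$ and hence not a contracted vertex at all, and (b) tree-connecting edges, whose other endpoint is a newly added contracted vertex of degree exactly $2$, hence even. In either case the parity flip at the Type-III middle vertex has no compensating contracted-vertex parity change to charge it to. Concretely, a Type-III set whose middle vertex receives a single leftover-vertex attachment and whose two endpoint vertices receive nothing ends up with internal degrees $1,3,1$ --- all three contracted vertices odd --- exceeding the budget of two that the coefficient $\frac{1}{3}w(Sol(III)_{SP}(G))$ allows. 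To repair this you need more than a charging scheme: the natural fix is to extend the modification step to Type-III sets as well (add an edge between the two endpoint contracted vertices of the path whenever both are odd; this makes both even, leaves at most the middle vertex odd, and increases $len(Tr(G))$ by one while decreasing $n\_odd_c$ by two, so the bound $len(Tr(G))+\frac{1}{2}(h_1(G)+n\_odd_c)$ used afterwards does not grow). As written, neither your argument nor the paper's stated construction establishes the Type-III term.
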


\noindent By applying Lemma \ref{lem:bound_on_minmatch} and \ref{lem:rel_nr_o_and_sol} and Eq. \ref{eq:bound_len_Tr(G)}, we get 
\begin{equation*}
    \begin{split}
        t_{ALG_2} &= len(Tr(G)) + cost(MM_{Tr^*})\\
        & \leq  n - w(Sol_{SP}(G)) + \frac{7\cdot w(Sol(I)_{{SP}}(G))}{8} + \frac{5 \cdot w(Sol(II)_{{SP}}(G))}{6} \\ 
        & \quad + \frac{11 \cdot w(Sol(III)_{{SP}}(G))}{12} + \frac{9 \cdot w(Sol(IV)_{{SP}}(G))}{10} - 2 + \frac{h_1(G)}{2}\\
        & = n - \frac{w(Sol_{SP}(G))}{12} -2 + \frac{h_1(G)}{2}
    \end{split}
\end{equation*}
We have $w(Sol_{SP}(G)) \geq \alpha \cdot w(OPT_{SP}(G))$, where $\alpha$ is the approximation factor of the approximation algorithm $Sol$ to the set packing problem $max_6SPP$. Furthermore, $w(OPT_{SP}(G)) \leq 2h_2(G) + 1$, as the maximum weight per configuration by the construction process is two. We also have $h_1(G) \leq 2h_2(G) +1$, from Lemma \ref{lem:rel-h1-h2} and $h_2(G) \geq (n - \frac{w(OPT_{SP}(G))}{2} - 1)$, from Lemma \ref{lem:lowerB_2-approx}. Combining all these inequalities, we have the following.
\begin{equation*}
    \begin{split}
        t_{ALG_2} &\leq  n - \frac{\alpha \cdot w(OPT_{SP}(G))}{12} -2 + h_2(G) -1\\
        & \leq  (n - \frac{w(OPT_{SP}(G)}{2} - 1) + (\frac{1}{2} - \frac{\alpha}{12})w(OPT_{SP}(G)) + h_2(G) - 2\\
        & \leq h_2(G) + (1 -\frac{\alpha}{6})h_2(G) + h_2(G) ~= (3-\frac{\alpha}{6})~h_2(G)
    \end{split}
\end{equation*}
     
\begin{theorem}
    \label{thm:<3-2-HWP}
    There exists a $(3-\frac{1}{21})$-approximation algorithm for $2$-HWP. If the graph has no $C_4$, the approximation factor reduces to two.
\end{theorem}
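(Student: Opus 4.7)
The plan is to assemble the chain of inequalities already produced in the preceding subsections into a single bound in $h_2(G)$. First I would instantiate the approximation ratio $\alpha$ of the weighted $6$-set packing subroutine by invoking the Thiery--Ward algorithm \cite{thiery2023improvedweightedpacking}, which achieves $\alpha$ arbitrarily close to $\frac{2}{7}$ for $max_6SPP$. Substituting $\alpha = \frac{2}{7}$ into the inequality $t_{ALG_2} \leq (3-\frac{\alpha}{6})\,h_2(G)$ derived immediately above yields the $(3-\frac{1}{21})$-approximation factor. Polynomial running time follows because constructing $\mathcal{I}_{SP}(G)$ costs $O(n^6)$, the Thiery--Ward algorithm runs in polynomial time, building $Tr(G)$ and its weighted closure $Tr^*(G)$ is polynomial, and Hoogeveen's modified Christofides heuristic is polynomial.

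For the second claim, I would show that when $G$ has no 4-cycle, every transition edge in any 2-HWP walk must be a 1-transition edge. Indeed, a $2$-transition edge between $(v_1,v_2)$ and $(v_4,v_3)$ requires the four edges $(v_1,v_2), (v_2,v_3), (v_3,v_4), (v_4,v_1)$ to all lie in $E$, producing a $C_4$. Since each $1$-transition edge contributes at most one new vertex to the union of visited vertices and the initial configuration covers exactly two vertices, this immediately gives the lower bound $h_2(G) \geq n-2$.

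For the matching upper bound in the no-$C_4$ case, I would simplify the algorithm: the packing instance $\mathcal{I}_{SP}(G)$ is empty, so the construction reduces to computing an arbitrary spanning tree $T$ of $G$. Running \textsc{1-HWP-For-Tree} on $T$ yields an open $1$-walk of length at most $2(n-1) - diam(T) \leq 2(n-1)$ by Lemma~\ref{lem:upperBound_1HWP_Tree}, and converting it to a $2$-walk using the construction in the proof of Lemma~\ref{lem:rel-h1-h2} produces a $2$-walk of length at most $2(n-1) - 1$. Combining the two bounds gives a ratio $\frac{2(n-1)-1}{n-2}$, which tends to $2$ as $n$ grows.

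The main technical obstacle is the bookkeeping around the Thiery--Ward ratio: their factor is \emph{slightly less} than $\frac{2}{7}$, so the theorem must be read as $(3-\frac{1}{21})$ up to an arbitrarily small additive loss $\frac{\epsilon}{6}$, which does not affect the strict improvement over the naive factor of $3$. A minor subtlety in the no-$C_4$ case is that $\frac{2n-3}{n-2}$ exceeds $2$ for small $n$; this low-order term is absorbed into the asymptotic statement and, if one wanted a tight non-asymptotic bound, could be handled by a direct case analysis on graphs with very few vertices.
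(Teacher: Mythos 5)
Your first part coincides with the paper's proof: both simply substitute $\alpha=\frac{2}{7}$ (Thiery--Ward) into the previously derived bound $t_{ALG_2}\leq(3-\frac{\alpha}{6})\,h_2(G)$, and your remark about the ``slightly less than $\frac{2}{7}$'' ratio is a fair caveat that the paper itself glosses over in the same way.

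The no-$C_4$ half is where you diverge, and there is a genuine gap. Your lower bound $h_2(G)\geq n-2$ via the observation that every transition must be a $1$-transition is exactly the paper's argument. But your upper bound---spanning tree, \textsc{1-HWP-For-Tree}, and the conversion of Lemma \ref{lem:rel-h1-h2}---only yields a walk of length about $2n-3\leq 2h_2(G)+1$, i.e.\ a ratio of $2+\frac{1}{n-2}$. This excess is present for \emph{every} $n$, not just small $n$ as you claim, because it comes from comparing a bound of the form $2(n-2)+O(1)$ against a lower bound of $n-2$; no finite case analysis on small graphs removes it, so your route proves only an asymptotic factor of $2$, not the stated factor. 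The paper instead keeps running $ALG_2$: with $\mathcal{I}_{SP}(G)=\emptyset$ the tree $Tr(G)$ has exactly $n-2$ edges and no contracted vertices, and the Hoogeveen-style matching bound of Lemma \ref{lem:bound_on_minmatch} gives $cost(MM_{Tr^*})\leq\frac{1}{2}h_1(G)\leq h_2(G)+\frac{1}{2}$, so $t_{ALG_2}\leq(n-2)+(h_2(G)-1)\leq 2h_2(G)$. The point you are missing is that the doubling of tree edges must be replaced by a minimum matching on odd-degree vertices, whose cost is charged against $h_2(G)$ rather than against $n$; that substitution is precisely what converts $2+o(1)$ into an exact factor of $2$.
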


\begin{proof}
    The best known approximation algorithm to $max_6SPP$ is $\frac{2}{7}$ by Theiry and Hard \cite{thiery2023improvedweightedpacking}. Using their algorithm as $Sol$ and setting $\alpha = \frac{2}{7}$, we have $$t_{ALG_2} \leq (3- \frac{1}{21})h_2(G)$$
    
    \noindent If the graph has no $C_4$, the instance $\mathcal{I}_{SP}(G) = \emptyset$, implies that $w(OPT_{SP}(G))$ $ = 0$. Additionally, all the transitions are being $1$-transitions, $h_2(G) = h_2^r(G) \geq (n-2)$. 
    Hence, from the inequality above,  $t_{ALG_2} \leq (n - 2) + (h_2(G) - 1) \leq 2h_2(G)$.
   \qed 
\end{proof}
 
\section{k-HWP on Hypergraphs}

\label{sec:hypergraph}
Here, we formulate the problem $k$-HWP on hypergraphs and design an approximation algorithm to solve the problem.
The formulation of the problem and the proposed solution will provide insight to obtain a solution for $k$-HWP on the graph.
A hypergraph generalizes a graph by allowing each hyperedge to connect any number of vertices. In this paper, we consider the undirected $k$-uniform hypergraph without multiple edges, where each hyperedge connects exactly $k$ vertices. We misuse $G = (V, E)$ to represent both the graph and the hypergraph. However, a hyperedge in an $k$-uniform hypergraph is a subset of $V$ of size $k$. The line graph $L(G)$ of a hypergraph $G$ is a graph where each vertex represents a hyperedge of $G$, and two vertices are adjacent if their corresponding hyperedges intersect. The hypergraph $G$ is connected if $L(G)$ is connected.
    
Before defining the problem on the hypergraph, let us revisit the problem formulation on the simple graph with $k=2$. In $2$-HWP, we traverse the graph via the edges, with transitions classified as shifts or jumps.
If the transition edge between $(x_i, y_i)$ and $(x_j, y_j)$ is a 1-transition edge, the pairs share a common vertex, and the move is called a ``shift''. Otherwise, it's a 2-transition edge, where one agent moves along $(x_i, x_j)$ and the other along $(y_i, y_j)$. After their movements, they reach $(x_j, y_j)$ and the transition is called ``jump''. 
We introduce a similar notion for hypergraphs.
Let all $k$ agents initially lie on $k$ distinct vertices $\{v_1^0, v_2^0, \cdots, v_k^0\}$ of $G$ such that there is a hyperedge of $G$, say $e_0$, that connects all these vertices.
In the next step, the movements are one of the following. 
(i) All agents move to a hyperedge $e_1 \in E$ such that $e_0$ and $e_1$ are adjacent, i.e., they share at least one vertex.
This transition is named ``shift''.
(ii) $\forall i: 1\leq i \leq k$ the agent $a_i$ moves from $v_i^0$ to $v_i^1$ such that there exists a hyperedge that contains $\{v_i^0, v_i^1\}$.
Moreover, there exists a hyperedge that contains all the vertices $\{v_1^1, v_2^1, \cdots, v_k^1\}$.
This transition is named ``jump''.
Thus, we can assume each configuration to be one of the hyperedges, and the traversal is through those hyperedges.
We formally define the problem below. The NP-hardness of the problem is directly derived from the NP-hardness of 2-HWP, since the 2-uniform hypergraph is equivalent to a simple graph. An example of a hypergraph with its bipartite graph representation and one of the optimal solutions for $k$-HWP on Hypergraph is depicted in Fig. \ref{fig:subfigures}

\begin{definition}[Problem: $k$-HWP on Hypergraph]
    Consider an undirected connected $k$-uniform hypergraph $G$ along with a collection of $k$ agents. The task is to find a sequence of configurations $\{\mathcal{C}_i\}$ with minimum length where each successive pair of configurations is adjacent, and every vertex in $G$ is included at least once in one of these configurations.
\end{definition}

 \begin{figure}
	\centering
	\begin{subfigure}{0.31\linewidth}
		\includegraphics[width=\linewidth]{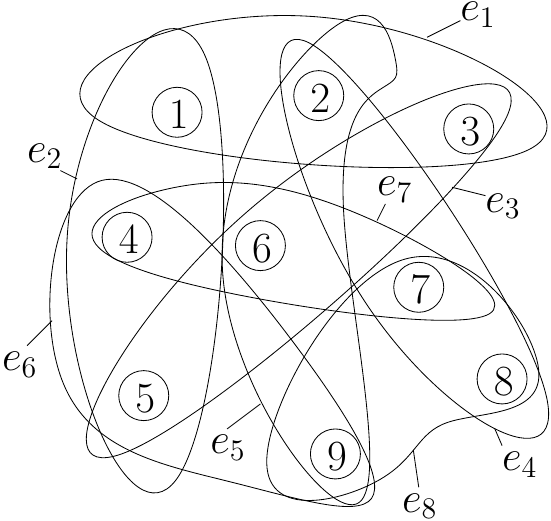}
		\caption{An expample of undirected  3-unform hypergraph $G$ with $V=\{1,2, \cdots, 9\}$, $E=\{e_1, e_2, \cdots, e_8\}$.}
		\label{fig:subfigA}
	\end{subfigure}
    \hspace{9pt}
	\begin{subfigure}{0.28\linewidth}
		\includegraphics[width=\linewidth]{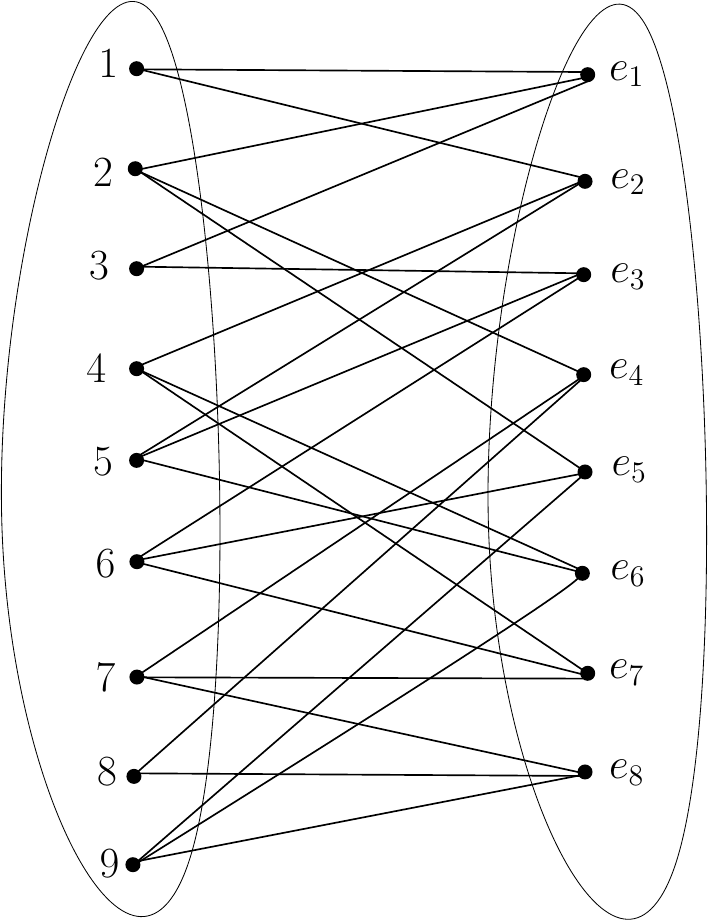}
		\caption{Bipartite graph representation of $G$, where the left partite set is the set $V$ and the right one is the set $E$.}
		\label{fig:subfigB}
	\end{subfigure}
    \hspace{9pt}
	\begin{subfigure}{0.31\linewidth}
	        \includegraphics[width=\linewidth]{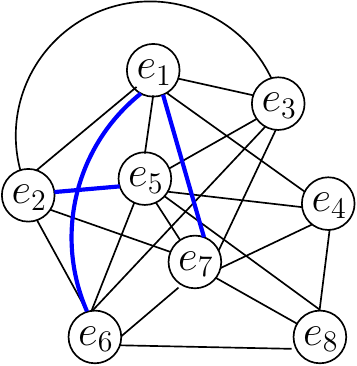}
	        \caption{The line graph $L(G)$ with some addition ``jump'' (blue heavy) edges between $e_1$ and $e_6$; $e_1$ and $e_7$; $e_2$ and $e_5$. }
	        \label{fig:subfigC}
         \end{subfigure}
	\caption{One of the optimal solution is the transition walk $\{e_1, e_7, e_6, e_8\}$.}
	\label{fig:subfigures}
\end{figure}

\noindent \textbf{High-level Idea of the Approx. Algorithm for k-HWP on Hypergraph:}
We begin by computing the line graph $L(G)$ of the given hypergraph $G$.
The graph $L(G)$ contains only the shift edges.
We then modify $L(G)$ by introducing all possible jump edges and denote the resulting graph by $L^*(G)$.
Next, we search for a minimum length walk of $L^*(G)$ that spans all the vertices of $V$.
To achieve this, we create an instance $\mathcal{I}_{CSC}(G)$ of the \textit{ minimum connected set cover problem} (MCSC) \cite{zhang2012connectedSetCover} from the graph $L^*(G)$.
In MCSC, we are given an universe of elements $U$, a family of subsets $\mathcal{F} = \{S\}$ where each $S \subseteq U$, and a connected graph where each vertex represents a unique $S \in \mathcal{F}$. 
The objective is to find a minimum size subfamily $\mathcal{F}' \subseteq \mathcal{F}$ such that each element of $U$ is present in some of the members of $\mathcal{F}'$ and the vertices associated with $\mathcal{F}'$ induce a connected subgraph of the given graph.  
For $\mathcal{I}_{CSC}(G)$, we consider the vertex set $V$ as the universe, the set of hyperedges $E$ as the family of subsets, and $L^*(G)$ is the underlying connected subgraph.
Using the best known approximation algorithm for MCSC by Zhang et al. \cite{zhang2012connectedSetCover}, which is of factor $(1 + \ln k)$, we derive a connected subgraph of $L^*(G)$, say $Sol$, that covers all the vertices of $V$.
Finally, we transform $Sol$ into a walk by doubling all its edges, yielding a solution of k-HWP on the hypergraph $G$.
The size of the solution is

\noindent \textbf{Detailed Description of the Approximation Algorithm for k-HWP on Hypergraph:}
\label{ref:subsec-app-descrip-algo-hypergraph}
First, we compute the line graph $L(G)$ of the given hypergraph $G$. The edges of $L(G)$ consist only of the shift edges. So, we remain to search all possible jump edges.
Therefore, we determine the open neighbourhood $N(v_i)$ of each vertex $v_i$, which is the set of vertices $u_i$ such that there exists an edge $e' \in E$ that contains $\{u_i, v_i\}$.
Now, for each edge $e=\{v_1,v_2, \cdots, v_k\} \in E$, we consider all possible sets of the form $\{u_1, u_2, \cdots, u_k\}$, where $u_i \in N(v_i)~ \forall i \in \{1, 2, \cdots, k\}$, and check if there is an edge $e''$ that connects the vertices $\{u_1, u_2, \cdots, u_k\}$.
If so, add an edge between $e$ and $e''$ on the line graph $L(G)$. There are at most $k\cdot \Delta(G)$ such sets for each $e \in E$, where $\Delta(G) = \max_{v\in V} |N(v)|$. We can discover all the jump edges in polynomial time. Let us give an unique id to each vertex $v_i \in V$ from the set $\{1, 2, \cdots, n\}$. We assume that the $k$-uniform hypergraph is stored in a $k$-dimensional array. The entry of $(i_1, i_2, \cdots, i_k\} \subset \{1, 2, \cdots, n\}$ is one if there is an edge that connects the vertices $\{v_1,v_2, \cdots, v_k\}$, otherwise zero. So, we can find whether there is an edge that connects the vertices $\{u_1, u_2, \cdots, u_k\}$ in $O(k)$ time. Thus, for a fixed hyperedge $e$, we can find all edges $e''$ for which there is a jump edge between them in $O(k^2 \cdot \Delta(G))$ time. Hence, all the jump edges of the line graph $L(G)$ can be computed in $O(m \cdot k^2 \cdot \Delta(G))$, where $m$ is the number of hyperedges of $G$. We denote this modified line graph by $L^*(G)$. Any solution of $k$-HWP on hypergraphs is an open walk on graph $L^*(G)$. Thus, the problem reduces to finding a minimum length walk of the graph $L^*(G)$ that covers all the vertices of $V$. We solve it using a subroutine of \textit{the minimum connected set cover problem}, defined below.

\noindent \textit{Minimum Connected Set Cover Problem(MCSC):} We have given a universe $U$ of elements and a family $\mathcal{F} = \{S\}$ of subsets of that universe such that $\cup S_{S \in \mathcal{F}} = V$. Any set cover is a subfamily $\mathcal{F}' \subseteq \mathcal{F}$ such that each element of $U$ is present in one of the elements in $\mathcal{F'}$. Minimum set cover is the set cover with the minimum cardinality.
The instance of the MCSC contains an additional connected graph whose vertex set is $\mathcal{F}$, which means that each $S \in \mathcal{F}$ is considered a vertex of the graph. The objective is to find a set cover $\mathcal{F}'$ with minimum cardinality such that the vertices associated with $\mathcal{F}'$ induce a connected subgraph of the given graph.  
Finding a minimum set cover is NP-hard. A simple greedy algorithm gives an approximation factor $(1+ \ln(\Delta))$, which is calculated using the pricing method and $\Delta = \max_{S \in \mathcal{F}}|S|$. Trevisan \cite{trevisan2001non} showed that the problem cannot be approximated within the factor $(\ln(\Delta)- O(\ln (\ln (\Delta))))$ unless P=NP. Dinur and Steurer \cite{dinur2014analytical} also established that the approximation of the set cover problem to within factor $(1-\epsilon)\ln n $ is NP hard for every $\epsilon>0$ and $n$ being the size of the universe. This result ensures that the $(1+ \ln(\Delta))$-approximation algorithm is asymptotically best unless P=NP. 
Zhang et al. \cite{zhang2012connectedSetCover} proposed the $(1+\ln(\Delta))$-approximation algorithm and an inapproximable bound to the MCSC.

From a given instance of $k$-HWP in hypergraph ($G$), we can construct an instance $\mathcal{I}_{CSC}(G)$ of the connected set cover problem by considering the vertex set $(V)$ as the universe, the set of hyperedges ($E$) as the family of subsets, and the line graph $L^*(G)$ as the underlying connected subgraph. Any solution to $k$-HWP on hypergraph $G$ is a feasible solution to the connected set cover problem for the instance $\mathcal{I}_{CSC}(G)$. So $OPT(\mathcal{I}_{CSC}(G)) \leq h_k(G)$, where $OPT(\mathcal{I}_{CSC}(G))$ is the optimum solution to the connected set cover problem for the instance $\mathcal{I}_{CSC}(G)$. Let $Sol$ be a solution to the connected set cover problem for the instance $\mathcal{I}_{CSC}(G)$ using a $\beta$-approximation algorithm from the literature, and let $|Sol|$ be the size of that solution. Then $|Sol| \leq \beta \cdot OPT(\mathcal{I}_{CSC}(G))$. For a given connected subgraph of $L^*(G)$ induced by the vertices of $Sol$, we can always construct a walk, by doubling all the edges and then finding an Eulerian walk. This walk is of length at most $2|Sol|$, which forms a feasible solution to $k$-HWP on the hypergraph $G$. So we have an approximate solution $(ALG_{HG})$ to $k$-HWP on $G$ with a solution size $(t_{ALG_{HG}})$ at most $2\cdot |Sol| \leq 2 \cdot \beta \cdot OPT(\mathcal{I}_{CSC}(G)) \leq 2\beta \cdot h_k(G)$.

\begin{theorem}
    \label{thm:approx-hypergraph}
    There exists a $2(1+\ln k)$-approximation algorithm for $k$-HWP on hypergraph.
\end{theorem}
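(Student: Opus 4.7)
The plan is to combine the three ingredients laid out in the algorithm description preceding the theorem: (i) the reduction from $k$-HWP on a $k$-uniform hypergraph $G$ to an instance of the Minimum Connected Set Cover (MCSC) problem on $L^*(G)$, (ii) Zhang et al.'s $(1+\ln \Delta)$-approximation for MCSC specialized to $\Delta = k$, and (iii) the doubling trick that turns a connected subgraph into a spanning walk. I will argue that the chain of inequalities $t_{ALG_{HG}} \le 2\,|Sol| \le 2(1+\ln k)\,OPT(\mathcal{I}_{CSC}(G)) \le 2(1+\ln k)\,h_k(G)$ holds, which yields the claimed approximation factor.

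First, I would verify that $\mathcal{I}_{CSC}(G)$ is a legitimate MCSC instance: the universe is $V$, the family $\mathcal{F}$ consists of the hyperedges $e \in E$ (each a $k$-subset of $V$), and the underlying graph is $L^*(G)$, which is connected because $G$ is connected (so $L(G)$ is connected, and adding jump edges preserves connectivity). The size of every set in $\mathcal{F}$ is exactly $k$, so the parameter $\Delta$ for Zhang et al.'s bound equals $k$, giving approximation factor $\beta = 1 + \ln k$.

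Second, I would prove $OPT(\mathcal{I}_{CSC}(G)) \le h_k(G)$. Given any optimal transition walk $\mathcal{C}_0,\mathcal{C}_1,\dots,\mathcal{C}_{h_k(G)}$ for $k$-HWP on $G$, each $\mathcal{C}_t$ is a hyperedge (a vertex of $L^*(G)$) and every consecutive pair is connected by either a shift or jump edge of $L^*(G)$; thus the set of hyperedges $\{\mathcal{C}_0,\dots,\mathcal{C}_{h_k(G)}\}$ induces a connected subgraph of $L^*(G)$ whose union contains every $v \in V$. So it is a feasible solution to $\mathcal{I}_{CSC}(G)$ of size at most $h_k(G)+1$; a standard tightening (using the fact that MCSC counts the cardinality of the cover while $h_k(G)$ counts transitions, which is one less) yields $OPT(\mathcal{I}_{CSC}(G)) \le h_k(G)+1$, and the asymptotic bound absorbs the additive constant. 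I would state this carefully to match the inequality $OPT(\mathcal{I}_{CSC}(G)) \le h_k(G)$ already asserted in the algorithm description.

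Third, for the conversion step, given the solution $Sol$ returned by Zhang et al.'s algorithm, I would double all the edges of the connected subgraph of $L^*(G)$ induced by $Sol$, observe that every vertex then has even degree and the graph is connected, hence admits an Eulerian circuit of length $2(|Sol|-1) < 2|Sol|$. This Eulerian traversal is a sequence of adjacent configurations that visits every hyperedge in $Sol$ and therefore covers every $v \in V$; it is a valid $k$-HWP transition walk. Combining, $t_{ALG_{HG}} \le 2|Sol| \le 2\beta \cdot OPT(\mathcal{I}_{CSC}(G)) \le 2(1+\ln k)\,h_k(G)$, completing the proof. The main (and essentially only) subtlety is the small off-by-one in comparing "number of configurations" with "length of walk" in the reduction step; everything else is a direct appeal to Zhang et al.\ and standard Eulerian arguments.
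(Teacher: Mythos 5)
Your proposal follows essentially the same route as the paper: reduce to MCSC on $L^*(G)$ with universe $V$ and family $E$, apply Zhang et al.'s $(1+\ln\Delta)$-approximation with $\Delta = k$, and double the edges of the resulting connected subgraph to obtain an Eulerian walk, yielding $t_{ALG_{HG}} \le 2|Sol| \le 2(1+\ln k)\,OPT(\mathcal{I}_{CSC}(G)) \le 2(1+\ln k)\,h_k(G)$. Your explicit attention to the off-by-one between the number of configurations and the walk length is in fact slightly more careful than the paper, which asserts $OPT(\mathcal{I}_{CSC}(G)) \le h_k(G)$ without comment; otherwise the arguments coincide.
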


\begin{proof}
    We obtain the factor by applying the best known approximation algorithm of the connected set cover problem, which is proposed by Zhang et al. \cite{zhang2012connectedSetCover}, i.e., by setting $\beta = (1+ \ln k)$, as in our case $\Delta = k$.
\qed \end{proof}

\begin{theorem}
   \label{thm:k-HWP-graph}
    There exists a $2(1+\ln k)$-approximation algorithm for $k$-HWP on a simple graph, when $k = O(1)$. 
\end{theorem}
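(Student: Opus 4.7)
The plan is to reduce $k$-HWP on a simple graph $G$ to an instance of the minimum connected set cover problem (MCSC), mirroring the framework used in the proof of Theorem~\ref{thm:approx-hypergraph}. First, I would construct an auxiliary hypergraph $H = (V(G), E(H))$ whose hyperedges are precisely the valid configurations of $k$-HWP on $G$, namely all $k$-subsets $S \subseteq V(G)$ such that the subgraph of $G$ induced by $S$ is connected. For $k = O(1)$, enumerating these hyperedges takes $O(n^k)$ time, and verifying the connectivity of each candidate $k$-subset takes $O(k^2)$ time.

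Next, I would build an analog of the modified line graph $L^*(G)$. Rather than relying on the generic shift/jump definitions available in the hypergraph setting, I would place an edge between two hyperedges $e_1, e_2 \in E(H)$ if and only if the corresponding configurations are adjacent in the sense of $k$-HWP on $G$: that is, there exists a bijection $\pi\colon e_1 \to e_2$ such that for every $v \in e_1$, either $\pi(v) = v$ or $(v,\pi(v)) \in E(G)$. Checking adjacency requires inspecting at most $k!$ bijections, which is constant for $k = O(1)$, so the whole line graph has $O(n^{2k})$ edges and can be constructed in polynomial time. Under this definition, walks in the line graph correspond exactly to transition walks of $k$-HWP on $G$.

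I would then form an MCSC instance $\mathcal{I}_{CSC}(G)$ with universe $V(G)$, family $E(H)$ (each hyperedge being a $k$-element set with $\Delta = k$), and underlying graph equal to the constructed line graph. Any optimal $k$-HWP solution for $G$ induces a feasible connected set cover of cardinality $h_k(G) + 1$, so $OPT(\mathcal{I}_{CSC}(G)) \leq h_k(G) + 1$. Applying Zhang et al.'s $(1 + \ln k)$-approximation for MCSC yields a connected subfamily $Sol$ of cardinality at most $(1 + \ln k)\cdot OPT(\mathcal{I}_{CSC}(G))$. Following the doubling argument from the proof of Theorem~\ref{thm:approx-hypergraph}, doubling the edges of a spanning tree of the subgraph induced by $Sol$ produces an Eulerian multigraph, whose Euler tour gives a walk of length at most $2|Sol| \leq 2(1 + \ln k)\cdot h_k(G)$. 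Because every line-graph edge corresponds to a legal single-step transition, this walk translates directly into a feasible transition walk of $k$-HWP on $G$.

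The main obstacle I foresee is arguing that the line graph defined above is connected, which is required for the MCSC framework to be well-posed. I expect this to follow from the reachability implicit in any feasible $k$-HWP solution: since a spanning transition walk must visit some configuration covering every vertex, all those configurations must lie in a common component of the line graph, and we may restrict attention to that component without loss. The polynomial-time complexity of the full procedure rests entirely on the assumption $k = O(1)$, because the number of configurations, the size of the line graph, and the runtime of the MCSC approximation are all polynomial in $n$ only under this constraint.
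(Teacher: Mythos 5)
Your proposal is correct and follows essentially the same route as the paper: reduce to the minimum connected set cover problem with universe $V$, family equal to all connected induced $k$-vertex subgraphs of $G$, an adjacency (line) graph built from the $k$-HWP transition relation using closed neighbourhoods $N[v]$ to allow stationary agents, and then apply the $(1+\ln k)$-approximation of Zhang et al.\ followed by edge doubling. Your explicit bijection-based adjacency test and your remark about restricting to the relevant connected component of the auxiliary graph are just more careful renderings of steps the paper treats implicitly; nothing substantive differs.
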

\begin{proof}
    For a given connected graph $G$, we construct an instance of the connected set cover problem, where the universe is the vertex set $V$, and the family of subsets $(\mathcal{F})$ consists of all possible connected subgraphs of $G$ of size $k$. There can be at most $n^k$ such subsets of $V$. For a given $S \subset V$ of size $k$, we check the connectivity of the induced subgraph made by the vertices of $S$ in $O(k+m)$ time. Therefore, we need $n^{O(k)}$ time to construct the family $\mathcal{F}$. Next, we build the underlying connected graph $G$ in a manner similar to the above process mentioned in the algorithm for k-HWP on hypergraph. The only difference is that, for each vertex $v_i \in V$ we consider its closed neighbourhood $N[v_i] = N(v_i) \cup \{v_i\}$, as in the definition of $k$-HWP on the graph allows an agent to remain stationary. Thus, by a similar argument to that above, we can estimate the proposed result. However, the complexity of time is dominated by the polynomial $n^k$. So, the algorithm runs in polynomial-time algorithm only if $k = O(1)$.
\qed \end{proof}

\section{Conclusion}
\label{sec:conclusion}
In this paper, we examine the Hamiltonian walk problem within the context of a connected group of mobile agents that minimally traverse the graph.
The NP-hardness of the problem leads us to find some approximation algorithms. We propose an approximation algorithm for the problem with a set of two agents for any arbitrary graph. Then we define the problem on hypergraphs and provide an approximation algorithm.
We also show some optimal results for the acyclic graph. 
Additionally, we define a restricted version of the problem in which each transition explores only one new vertex compared to the previous one.
In the future, exploring tighter bounds for the problem and investigating the class of graphs in which a polynomial solution exists will be interesting.

\bibliographystyle{splncs04}
\bibliography{Algowin}

\end{document}